\def\calF{\mathcal{F}}
\def\calD{\mathcal{D}}
\def\calX{\mathcal{X}}
\newcommand{\manfred}[1]{\textcolor{blue}{Manfred: #1}}
\DeclareMathOperator{\Inv}{Inv}
\newcommand{\True}{\textsc{true}\xspace}
\newcommand{\False}{\textsc{false}\xspace}
\newcommand{\Complete}{\textsc{Comp}\xspace}
\def\NUMPROOFS{$41$\xspace}
\def\NUMSETTINGS{$144$\xspace}
\def\NUMGREEDY{$20$\xspace}
\def\NUMPROOFSFOUR{$2328$\xspace} 
\def\NUMSETTINGSFOURSELECTION{$4351$\xspace}
\newtheorem{theorem}{Theorem}
\numberwithin{theorem}{section} 
\newtheorem{lemma}[theorem]{Lemma}
\newtheorem{claim}{Claim}
\newtheorem{problem}{Problem}
\def\inst#1{$^{#1}$}
\begin{document}

\title{Finding hardness reductions automatically using SAT solvers\footnote{%
This work was initiated at the \emph{Order \& Geometry Workshop} in Ci\k{a}\.{z}e\'{n} in September~2022. We thank the organizers and all participants for the inspiring atmosphere. In particular, we thank Konrad Majewski for many fruitful discussions.
H.~Bergold was supported by the DFG-Research Training Group 'Facets of Complexity' (DFG-GRK~2434). 
M.~Scheucher was supported by the DFG Grant SCHE~2214/1-1. 
F. Schröder was supported by the GA\v{C}R Grant no. 23-04949X.}}

\author{
Helena Bergold\inst{1}
\and
Manfred Scheucher\inst{2}
\and
Felix Schröder\inst{2}\inst{3}
}

\date{\vspace{-1em}}

\maketitle
\begin{center}
{\footnotesize
\inst{1} 
Department of Computer Science, 
Freie Universit\"at Berlin, Germany\\
\texttt{firstname.lastname@fu-berlin.de}
\\\ \\
\inst{2} 
Institut f\"ur Mathematik, 
Technische Universit\"at Berlin, Germany\\
\texttt{lastname@math.tu-berlin.de}
\\\ \\
\inst{3} 
Department of Applied Mathematics, Faculty of Mathematics and Physics,\\
Charles University, Prague, Czech Republic\\
\texttt{schroder@kam.mff.cuni.cz}
\\\ \\
}
\end{center}

\begin{abstract}
In this article, we show that the completion problem, i.e. the decision problem whether a partial structure can be completed to a full structure, is \NP-complete for many combinatorial structures.  
While the gadgets for most reductions in literature are found by hand, we present an algorithm to construct gadgets in a fully automated way. 
Using our framework which is based on SAT,
we present the first thorough study of the completion problem on sign mappings with forbidden substructures by classifying thousands of structures for which the completion problem is \NP-complete.
Our list in particular includes interior triple systems, which were introduced by Knuth towards an axiomatization of planar point configurations. 
Last but not least, we give 
an infinite family of structures
generalizing interior triple system to higher dimensions 
for which the completion problem is \NP-complete.
\end{abstract}

\section{Introduction}
\label{sec:intro}

A fundamental question in combinatorics
which naturally appears in various contexts is whether a partial structure can be completed.
In certain cases, the
completion problem (short: \Complete) can be decided in polynomial time. Planar graphs, for example, can be completed to a triangulation by greedily adding edges.
However, 
\Complete is known to be \NP-hard for plenty of structures such as Latin squares \cite{Colbourn1984},
Sudokus and related logic puzzles \cite{YatoSeta2003},
Steiner triple systems \cite{Colbourn1983},
and many others~\cite{KantBodlaender1997,Bouchitte1987}.
Moreover, even though completability might be polynomial on a structure such as planar graphs or partial orders,
this can change drastically when restricting to a subclass such as
biconnected planar graphs with bounded maximum degree \cite{KantBodlaender1997}
or 
partial orders with a bounded number of jumps \cite{Bouchitte1987}.
Besides purely combinatorial structures, computationally hard completion problems also appear in combinatorial geometry, for example in simple topological drawings~\cite{AKPSVW2022}.

In this article, we consider the completion problem of various combinatorial structures and present an algorithm to find gadgets for a hardness reduction from 3SAT in an automatic way.
Even though the gadget-finding problem lies in the third level of the polynomial hierarchy,
our Python framework performs quite well in practice. 
We use 
the SAT solver
picosat~\cite{Biere2019} via the Python interface pycosat~\cite{pysat}
to enumerate partial configurations,
which are then tested for modelling a gadget.
If a partial configuration does not
fulfill the properties of the desired gadget,
we prune the search space in a CDCL-like fashion 
to speed up the computations
(for more information on the conflict-driven clause learning algorithm, see \cite[Chapter~4]{HandbookSatisfiablity2009}).
Since the search space of all partial configurations is a partial order sorted by inclusion with respect to the domain, 
we exclude up- and down-sets which cannot contain a solution to the gadget-finding~problem.

\subparagraph{Results}
We use our algorithm to find gadgets for pattern avoiding sign mappings describing various combinatorial structures. 
Using our framework, we derive \NP-hardness proofs for thousands of structures. 
For rank $3$ we show that for at least \NUMPROOFS out of \NUMSETTINGS non-isomorphic settings for which our combination lemma applies the completion problem is \NP-hard (\Cref{thm:rank3}).  
For \NUMGREEDY of them, 
the completion problem can be decided in polynomial time in a greedy way. 
Since in rank~4 there are several millions of 
non-isomorphic settings,
we restrict our attention to a benchmark selection of \NUMSETTINGSFOURSELECTION settings.
From this selection, we classify \NUMPROOFSFOUR 
as \NP-hard (\Cref{thm:rank4}).
It is worth noting that our algorithm found gadgets for various other settings.
However, having a
combination lemma is crucial to combine the gadgets to a hardness reduction. 

By combining human ingenuity with computer power
we further managed to find an infinite family of sign mappings for which the completion problem is \NP-hard. This family consists of sign mappings of even rank $r \ge 4$ avoiding the two alternating sign sequences (\Cref{thm:even_rank}) which are a higher dimensional version of the interior triple systems introduced by Knuth.
These structures are 
a combinatorial generalization of point sets in dimension $r-1$. 
More specifically, they generalize the well-known acyclic chirotopes of rank~$r$ 
\cite[Definition~3.4.7]{BjoenerLVWSZ1993};
for the rank~3 case see also \cite[Axiom~4]{Knuth1992}.

Before we describe our algorithm (\Cref{algo}) for solving the gadget-finding problem in Section~\ref{sec:algorithm}, 
we introduce some terminology
in Section~\ref{sec:prelim}
which allows to discuss completion problems in a unified manner.
In Section~\ref{ssec:sign_mappings},
we discuss completion problems on various pattern avoiding sign mappings and in \Cref{sec:thms} we present the settings where the application of our algorithm led to a hardness proof. 
In Section~\ref{sec:proof} we give a proof for the setting of generalized signotopes (a.k.a interior triple systems), a combinatorial structure arising from planar point sets \cite{Knuth1992}, pairwise intersecting
planar convex sets \cite{AgostonDKP22_convexsets} and simple drawings of the complete graph \cite{BFSSS_TDCTCG_2022}. 
The proofs for the remaining structures from \Cref{thm:rank3}, \Cref{thm:rank4}, and \Cref{thm:even_rank} will be discussed in \Cref{sec:others,app:even_rank}.
Essentially, all proofs work in a similar manner as the combination lemma (Lemma~\ref{lem:combination}) applies.
All gadgets are provided as supplemental data 
and can be verified with the provided framework \cite{supplemental_data}.

\subparagraph{Related work}

Automating the verification process of gadgets for NP-hardness reductions 
is not a new concept in theoretical computer science.
In certain cases it was even possible to construct gadgets in a (semi)automated manner.
Trevisan et al.\ \cite{TSSW2000}  used a linear-programming based framework to verify constant sized gadgets towards NP-hardness proofs for approximations of the MAXCUT problem.
%
%
Adler et al.\ \cite{ABDDLL2021} used a SAT/SMT-based framework to verify constant sized gadgets towards a NP-hardness proof of Tatamibari -- a japanese logic puzzle. 



\section{Preliminaries}
\label{sec:prelim}

We consider the completion problem for \emph{sign mappings}, which play a central role in combinatorics and computational geometry~\cite{FelsnerGoodman2016, FelsnerWeil2001,GoodmanPollack1983,Knuth1992,ORourke1994_book}.
In particular, they are a key tool for computer-assisted investigations on geometric point sets; see e.g.\ \cite{AichholzerAurenhammerKrasser2001,SzekeresPeters2006}.
\mbox{Examples} are given in Section~\ref{ssec:sign_mappings}.

We use the standard notation $[n] = \{1,\ldots,n\}$ and $\binom{E}{r} = \{(x_1,\ldots,x_r) \subseteq E : x_1<\ldots<x_r\}$ for subsets $E \subseteq [n]$. Note that we only consider subsets $E$ of $[n]$ and hence there is a natural linear order on the elements.
A \emph{sign mapping on~$E$ of rank~$r$} is a mapping 
$\sigma:\binom{E}{r} \to \{+,-\}$ with $E \subseteq [n]$.
Such a mapping is \emph{partial}, if only some values of~$\binom{E}{r}$ are determined, i.e., a mapping $\sigma:\calD' \to \{+,-\}$ with $\calD' \subseteq \binom{E}{r}$.
In the following, we often consider a sign mapping $\sigma:\binom{E}{r} \to \{+,-\}$ as a word from $\{+,-\}^{\binom{|E|}{r}}$ with the convention that
signs are encoded in lexicographical order of the $r$-tuples. 
More generally, a partial mapping $\sigma:\calD' \to \{+,-\}$ with $\calD' \subseteq \binom{E}{r}$
can be considered as a mapping from $\binom{E}{r}$ to $\{+,?,-\}$ or as a word from $\{+,?,-\}^{\binom{|E|}{r}}$, where $?$ indicates that an entry is unset.
For example, the 4-element rank~3 partial sign mapping 
$\sigma(1,2,3)=+, \sigma(1,2,4)=-, \sigma(2,3,4)=+$ is encoded by ${+}{-}?{+}$.
A sign mapping $\sigma$ on~$[n]$ \emph{avoids} 
a family $\calF$ of sign patterns 
or is \emph{$\calF$-avoiding} for short 
if the word $\sigma|_{\binom{E'}{r}}$ is not contained in $\calF$ for all subsets $E' \subseteq E$. 
Note that we can use other domains such as $\calD = E^r$ in an analogous manner, which may be used to describe directed graphs, Latin squares, and Sudokus.

\begin{problem}[$\calF$-\Complete]
    \label{problem:completability}
    Let $\sigma$ be a partial sign mapping on $ \calD' \subset \calD$.
    Is $\sigma$ \emph{completable} to an $\calF$-avoiding sign mapping on $\calD$, i.e., 
    is there an $\calF$-avoiding sign mapping  $\sigma^\ast$ on $\calD$ with $\sigma^\ast |_{\calD'} = \sigma$? 
\end{problem}


In this article we prove \NP-hardness for several $\calF$-avoiding sign mappings via a reduction from 3SAT. 
For this we introduce gadgets, which are partial sign mappings on a small substructure with special completability properties.  
They are completable to a full $\calF$-avoiding sign mapping on the substructure if and only if their corresponding variables are a model for a Boolean formula.


More precisely, for a given sign mapping which is encoded on a domain $\calD$ with a family of forbidden substructures $\calF$, we want to find a gadget
that represents a quantifier-free Boolean formula $\psi$ on variables~$\calX$.
For example, we may search for a
propagator gadget $A \rightarrow B$,  a negator gadget $A \rightarrow \neg B$ and a clause gadget $A \vee B \vee C$. 

Since we connect the Boolean formula with sign mappings, 
we associate $+$ with $\True$ and $-$ with $\False$ 
for readability. 
To connect the variables of the Boolean formula $\psi$ 
with the given domain~$\calD$, 
we consider an injective function
$I:\calX \hookrightarrow \calD$.
The sign of the tuple~$I(x) \in \calD$ encodes the value of the variable~$x \in \calX$.
A \emph{gadget} is a partial mapping $\sigma : \calD' \to \{+,-\}$  
such that the signs of the $r$-tuples which correspond to the variables remain unset.
Moreover, for an assignment $f:\calX \to \{+,-\}$ of the variables of $\psi$, $\sigma_f$ denotes the mapping obtained from $\sigma$ 
by further setting the values of the substructures corresponding to the assignment of the variables in the Boolean formula:
$\sigma_f(I(x)) := f(x)$ for all $x \in \calX$.
A partial mapping $\sigma$ is a \emph{gadget} if 
for every possible assignment of the variables 
$f:\calX \to \{+,-\}$, 
$\sigma_f$ is completable if and only if $\psi(f)=\True$.

\subsection{Examples of sign mappings}
\label{ssec:sign_mappings}

Many well-studied combinatorial designs such as matroids or Steiner systems can be described
via pattern-avoiding sign mappings on some domain $\binom{E}{r}$ which 
indicate which $r$-tuples form a basis or block, respectively.
More specifically,
Steiner systems $S(t,k,n)$ 
can be described in terms of forbidden subconfigurations on $(2k-t)$-tuples,
and the basis-exchange property of rank~$r$ 
matroids yield forbidden subconfigurations on $2r$-element subsets.
While the completion problem was proven to be hard for several well-known combinatorial designs such as Steiner systems, Latin squares, or Sudokus \cite{Colbourn1983,Colbourn1984,YatoSeta2003},
resolving the complexity appears to a non-trivial task in general and the complexity remains unknown for many well-known structures such as matroids.

Before we give a quantitative characterization, 
we discuss some 
simple examples of sign mappings on the 
domain $\calD = \binom{[n]}{r}$ which play an central role in combinatorics and computational geometry.

\subparagraph{Permutations}

A permutation $\pi : [n] \to [n]$ 
is uniquely determined by its set of inversions
$\Inv(\pi) := \{(a,b) \in \binom{[n]}{2} : \pi(a) > \pi(b)\}$,
which have the property that if $(a,b),(b,c) \in \Inv(\pi)$, then $(a,c) \in \Inv(\pi)$ and similarly if $(a,b),(b,c) \not \in \Inv(\pi)$, then $(a,c) \not \in \Inv(\pi)$. 
By assigning $+$ to each pair $(a,b)$ with $(a,b) \notin \Inv(\pi)$ and $-$ otherwise, we obtain a rank~2 sign mapping $\sigma:\binom{[n]}{2} \to \{+,-\}$ such that for no three distinct elements $a,b,c \in \binom{[n]}{3}$ it holds $\sigma(a,b)={+}$, $\sigma(a,c)={-}$, $\sigma(b,c)={+}$ or $\sigma(a,b)={-}$, $\sigma(a,c)={+}$, $\sigma(b,c)={-}$.
In other words, $\sigma$ is $\{{+}{-}{+},{-}{+}{-}\}$-avoiding. This is a well-known combinatorial encoding of permutations. 

Since we can check efficiently whether a partial mapping avoids the patterns ${+}{-}{+}$ and ${-}{+}{-}$,
and such partial mappings are in correspondence with partial orders on~$[n]$ we can always complete them to a total order on $[n]$ which correspond to a permutation. 
Hence $\{{+}{-}{+},{-}{+}{-}\}$-\Complete can be solved efficiently.

\subparagraph{Cyclic order}
In a \emph{cyclic order},
the elements of $[n]$ are arranged on a circle. 
This order corresponds to a cyclic permutation and is determined by its \emph{cyclic inversions}, i.e., the triples $(a,b,c)$ with $a<b<c$ which appear in the order $a,c,b$ in the cycle.
Cyclic orders on $[n]$ are in correspondence with 
rank~3 sign mappings on $[n]$ avoiding 
$\{{+}{-}{+}{-},\,
{-}{+}{-}{+},\, \allowbreak{}
{+}{-}{-}{-},\, 
{-}{+}{-}{-},\, \allowbreak{}
{-}{-}{+}{-},\, 
{-}{-}{-}{+},\, \allowbreak{}
{-}{+}{+}{+},\,
{+}{-}{+}{+},\,\allowbreak{}
{+}{+}{-}{+},\,$  ${+}{+}{+}{-}\}$.
In contrast to the setting of permutation
where one can efficiently decide completability, 
Galil and Megiddo~\cite{GalilMegiddo1977} showed that \Complete is \NP-complete for cyclic orders.

\subparagraph{CC-Systems}
Another well-known example of rank~3 sign-mappings appears as a natural relaxation of point sets in the plane.
In literature, this structure occurs under various names such as \emph{CC-systems}, \emph{abstract order types},  or \emph{acyclic rank~3 chirotopes}.
They can be characterized
by forbidden sign patterns on 5-element subsets.
We refer the interested reader to Knuth's book \cite{Knuth1992},
where he shows that \Complete is \NP-complete for CC-systems.
Knuth's proof also transfers to the setting of pre-CC-systems
\cite{Baier05_chirotopesNP},
which are known as rank~3 chirotopes, not necessary acyclic.
For an independent proof see also \cite[Chapter~5]{Tschirschnitz2003}.

\subparagraph{Generalized Signotopes}

Towards an axiomatization for point sets in the plane,
Knuth~\cite{Knuth1992} observed that the triple-orientations of a point set $S$ in the plane induce a rank~3 sign mapping $\sigma_S$ which in particular avoids the two patterns ${+}{-}{+}{-}$ and ${-}{+}{-}{+}$.
In the want of a better name,
Knuth named such rank~3 sign mappings 
\emph{interior triple systems}.
Following \cite{BFSSS_TDCTCG_2022}, we shall name them \emph{generalized signotopes},
as they are a natural generalization of  signotopes (of rank~3). 
Generalized signotopes not only
generalize cyclic permutations and CC-\-systems, but appear in the context of various combinatorial structures such as 
simple topological drawings~\cite{BFSSS_TDCTCG_2022}, monotone colorings of hypergraphs~\cite{Balko2019} and convex set systems \cite{AgostonDKP22_convexsets}.

More generally, it is well-known that the sign mapping $\sigma_S$ of a point set $S$ in dimension $d$ is of rank $r=d+1$ and fulfills the axioms of an acyclic  chirotope \cite[Definition~3.4.7]{BjoenerLVWSZ1993}.
In particular, it 
avoids the two alternating patterns ${+}{-}{+}{-} \ldots ({-})^{d+1}$ and ${-}{+}{-}{+} \ldots ({-})^{d}$ of length $d+2$, 
where $(-)^k$ denotes the sign of $(-1)^k$. 
This is a higher dimensional version of generalized signotopes. 
Recall that rank~$2$ sign mappings avoiding $\{{+}{-}{+}, {-}{+}{-}\}$ are permutations.

\section{Hard structures}
\label{sec:thms}

For rank $r=3$, we show that $\calF$-\Complete is \NP-complete for \NUMPROOFS non-isomorphic settings~$\calF$ such that no sign sequence with two consecutive plus signs is contained in $\calF$.

\begin{theorem}
    \label{thm:rank3}
    For $r=3$, $\calF$-\Complete is \NP-complete 
    for \NUMPROOFS families $\calF$  
    which are given in \Cref{listing:families}.
    This includes generalized signotopes, i.e., $\calF = \{{+}{-}{+}{-},{-}{+}{-}{+}\}$.
\end{theorem}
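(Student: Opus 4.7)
The plan is to prove \Cref{thm:rank3} by a reduction from 3SAT, for each of the \NUMPROOFS families $\calF$ separately, following the gadget-based recipe set up in \Cref{sec:prelim,sec:algorithm} and glued together by the combination lemma (\Cref{lem:combination}). Membership in \NP is immediate: given a candidate completion $\sigma^\ast:\binom{[n]}{3}\to\{+,-\}$, one checks in time $O(n^4)$ whether any 4-element subset induces a forbidden pattern from $\calF$, since each family in \Cref{listing:families} consists of patterns of constant length.

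For \NP-hardness, I would fix an arbitrary 3SAT instance $\psi$ on variables $\calX$ and clauses $C_1,\dots,C_m$, and reduce it to $\calF$-\Complete. For every $\calF$ in the statement, the framework described in \Cref{sec:algorithm} searches for three types of small partial rank-3 sign mappings, namely a \emph{propagator} $A\to B$, a \emph{negator} $A\to\neg B$, and a \emph{clause gadget} $A\vee B\vee C$, each with designated ``input/output'' triples $I(x)\in\binom{[n]}{3}$ whose signs remain unset. By definition of a gadget (\Cref{sec:prelim}), the partial sign mapping obtained by fixing these input/output signs according to an assignment $f:\calX\to\{+,-\}$ is completable to an $\calF$-avoiding sign mapping on the gadget's ground set if and only if the Boolean relation modelled by the gadget is satisfied. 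For each of the \NUMPROOFS settings listed, the SAT-based search produces such gadgets explicitly; these gadgets are supplied as supplemental data \cite{supplemental_data} and can be verified mechanically.

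The next step is the composition: from a 3SAT instance $\psi$ we build a partial mapping $\sigma_\psi$ on some ground set $[n]$ with $n=\mathrm{poly}(|\psi|)$ by taking one copy of a negator or propagator for every literal, one copy of a clause gadget per clause $C_j$, and identifying their input/output triples according to the incidence pattern of $\psi$ (so that the sign encoding the literal $x$ or $\neg x$ feeding into $C_j$ coincides across gadgets). The \emph{combination lemma} (\Cref{lem:combination}) is the crucial tool here: it states exactly that, under mild disjointness conditions on the non-input coordinates of the gadgets, there are no unintended forbidden patterns spanning different gadgets, so completability of $\sigma_\psi$ is equivalent to simultaneous satisfiability of all the Boolean relations modelled by the gadgets, i.e., to satisfiability of $\psi$. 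Since the construction is polynomial in $|\psi|$ (the gadget sizes are constants depending only on $\calF$), this yields the required many-one reduction $3\text{SAT}\le_p \calF\text{-}\Complete$.

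The main obstacle is verifying that, for each of the \NUMPROOFS families, the SAT-solver-produced gadgets indeed satisfy the hypotheses of \Cref{lem:combination}, so that identifying the input/output triples across distinct gadgets does not accidentally create a forbidden subword from $\calF$ spanning two gadgets; this is where the ``no two consecutive plus signs in $\calF$'' restriction noted before the theorem plays its role, since it lets us pad freely with $+$-signs between gadgets. The remaining cases (the specific family $\{{+}{-}{+}{-},{-}{+}{-}{+}\}$ of generalized signotopes) will be worked out in full in \Cref{sec:proof}, and the \NUMPROOFS$-1$ other families proceed by the same template with the gadget data from \cite{supplemental_data}; see \Cref{sec:others}.
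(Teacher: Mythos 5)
Your proposal follows essentially the same route as the paper: \NP-membership by brute-force checking of all $4$-subsets, and \NP-hardness via a reduction from 3SAT using SAT-solver-found propagator/negator/clause gadgets glued together by the combination lemma, with the ``no two consecutive $+$'' condition enabling the $+$-padding between gadgets. The only point where you are vaguer than the paper is the gluing condition --- \Cref{lem:combination} requires the pairwise intersections of gadget domains to be \emph{consecutive}, not merely disjointness of non-input coordinates, and satisfying this (building the $6$-element propagators from smaller pieces and choosing the propagation direction per setting, cf.\ the five scenarios in \Cref{sec:others}) is exactly the nontrivial bookkeeping you correctly defer to \Cref{sec:proof} and \Cref{sec:others}.
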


\begin{lstlisting}[
backgroundcolor=\color{white},
    breaklines=true,
    breakindent=0pt,
    basicstyle=\ttfamily,
    %xleftmargin=.25in,
    %xrightmargin=.25in,
    caption={\NUMPROOFS families for $r =3$ for which \Complete is \NP-complete.},
    mathescape,
    label=listing:families,
    ]
$\{{+}{-}{+}{-},{+}{-}{-}{+}\}$, $\{{+}{-}{+}{-},{-}{+}{-}{+}\}$, $\{{+}{-}{+}{-},{-}{+}{-}{-}\}$, $\{{+}{-}{+}{-},{-}{-}{-}{+}\}$, $\{{+}{-}{+}{-},{-}{-}{-}{-}\}$, $\{{+}{-}{-}{+},{-}{+}{-}{-}\}$, $\{{+}{-}{-}{+},{-}{-}{-}{-}\}$, $\{{+}{-}{+}{-},{+}{-}{-}{+},{-}{+}{-}{+}\}$, $\{{+}{-}{+}{-},{+}{-}{-}{+},{-}{+}{-}{-}\}$, $\{{+}{-}{+}{-},{+}{-}{-}{+},{-}{-}{+}{-}\}$, $\{{+}{-}{+}{-},{+}{-}{-}{+},{-}{-}{-}{+}\}$, $\{{+}{-}{+}{-},{+}{-}{-}{+},{-}{-}{-}{-}\}$, $\{{+}{-}{+}{-},{+}{-}{-}{-},{-}{+}{-}{+}\}$, $\{{+}{-}{+}{-},{-}{+}{-}{+},{-}{+}{-}{-}\}$, $\{{+}{-}{+}{-},{-}{+}{-}{+},{-}{-}{-}{-}\}$, $\{{+}{-}{+}{-},{-}{+}{-}{-},{-}{-}{-}{+}\}$, $\{{+}{-}{+}{-},{-}{+}{-}{-},{-}{-}{-}{-}\}$, $\{{+}{-}{+}{-},{-}{-}{-}{+},{-}{-}{-}{-}\}$, $\{{+}{-}{-}{+},{-}{+}{-}{-},{-}{-}{+}{-}\}$, $\{{+}{-}{-}{+},{-}{+}{-}{-},{-}{-}{-}{-}\}$, $\{{+}{-}{+}{-},{+}{-}{-}{+},{+}{-}{-}{-},{-}{+}{-}{+}\}$, $\{{+}{-}{+}{-},{+}{-}{-}{+},{-}{+}{-}{+},{-}{+}{-}{-}\}$, $\{{+}{-}{+}{-},{+}{-}{-}{+},{-}{+}{-}{+},{-}{-}{-}{-}\}$, $\{{+}{-}{+}{-},{+}{-}{-}{+},{-}{+}{-}{-},{-}{-}{+}{-}\}$, $\{{+}{-}{+}{-},{+}{-}{-}{+},{-}{+}{-}{-},{-}{-}{-}{+}\}$, $\{{+}{-}{+}{-},{+}{-}{-}{+},{-}{+}{-}{-},{-}{-}{-}{-}\}$, $\{{+}{-}{+}{-},{+}{-}{-}{+},{-}{-}{+}{-},{-}{-}{-}{-}\}$, $\{{+}{-}{+}{-},{+}{-}{-}{+},{-}{-}{-}{+},{-}{-}{-}{-}\}$, $\{{+}{-}{+}{-},{+}{-}{-}{-},{-}{+}{-}{+},{-}{-}{+}{-}\}$, $\{{+}{-}{+}{-},{+}{-}{-}{-},{-}{+}{-}{+},{-}{-}{-}{-}\}$, $\{{+}{-}{+}{-},{-}{+}{-}{+},{-}{+}{-}{-},{-}{-}{-}{-}\}$, $\{{+}{-}{+}{-},{-}{+}{-}{-},{-}{-}{-}{+},{-}{-}{-}{-}\}$, $\{{+}{-}{-}{+},{-}{+}{-}{-},{-}{-}{+}{-},{-}{-}{-}{-}\}$, $\{{+}{-}{+}{-},{+}{-}{-}{+},{+}{-}{-}{-},{-}{+}{-}{+},{-}{-}{+}{-}\}$, 
$\{{+}{-}{+}{-},{+}{-}{-}{+},{+}{-}{-}{-},{-}{+}{-}{+},{-}{-}{-}{-}\}$, $\{{+}{-}{+}{-},{+}{-}{-}{+},{-}{+}{-}{+},{-}{+}{-}{-},{-}{-}{+}{-}\}$, 
$\{{+}{-}{+}{-},{+}{-}{-}{+},{-}{+}{-}{+},{-}{+}{-}{-},{-}{-}{-}{-}\}$, $\{{+}{-}{+}{-},{+}{-}{-}{+},{-}{+}{-}{-},{-}{-}{+}{-},{-}{-}{-}{-}\}$, 
$\{{+}{-}{+}{-},{+}{-}{-}{+},{-}{+}{-}{-},{-}{-}{-}{+},{-}{-}{-}{-}\}$, $\{{+}{-}{+}{-},{+}{-}{-}{+},{+}{-}{-}{-},{-}{+}{-}{+},{-}{-}{+}{-},{-}{-}{-}{-}\}$, $\{{+}{-}{+}{-},{+}{-}{-}{+},{-}{+}{-}{+},{-}{+}{-}{-},{-}{-}{+}{-},{-}{-}{-}{-}\}$
\end{lstlisting}

The proof of \Cref{thm:rank3} is divided into two parts. In \Cref{sec:proof} we give the full proof in the setting of generalized signotopes. 
To show \NP-hardness, we reduce from 3SAT.
For this we define propagator and clause gadgets as partial mappings on small domains. 
The combination of the gadgets is possible by Lemma~\ref{lem:combination}.
The combination lemma only applies to \NUMSETTINGS non-isomorphic structures of rank 3.
For all \NUMSETTINGS we checked whether there exists gadgets for our reduction using a simple algorithm based on a SAT framework to find gadgets. The algorithm is described in \Cref{sec:algorithm} and in \Cref{sec:others} we explain how to choose the gadgets that they work for a reduction using our combination lemma.
While for \NUMPROOFS structures \Complete turns out to be \NP-hard, \Complete can be solved efficiently for \NUMGREEDY structures.
More specifically, Listing~\ref{listing:greedy} lists \NUMGREEDY of the \NUMSETTINGS settings where \Complete can be solved 
by assigning~$+$ to unset tuples assuming that there are no sequences in which the remaining sign are determined by the forbidden patterns.

\goodbreak

\begin{lstlisting}[
backgroundcolor=\color{white},
    breaklines=true,
    breakindent=0pt,
    basicstyle=\ttfamily,
    %xleftmargin=.25in,
    %xrightmargin=.25in,
    caption={\NUMGREEDY families for $r=3$ 
        for which \Complete can be solved by greedily assigning~$+$.},
    mathescape,
    label=listing:greedy,
    ]
$\emptyset$, $\{{-}{-}{-}{-}\}$, $\{{+}{-}{-}{-}\}$, $\{{-}{+}{-}{-}\}$, $\{{+}{-}{-}{-},{-}{+}{-}{-}\}$, $\{{+}{-}{-}{-}, {-}{-}{+}{-}\}$, $\{{+}{-}{-}{-}, {-}{-}{-}{+}\}$, $\{{+}{-}{-}{-}, {-}{-}{-}{-}\}$, $\{{-}{+}{-}{-},{-}{-}{+}{-}\}$, $\{{-}{+}{-}{-}, {-}{-}{-}{-}\}$, $\{{+}{-}{-}{-}, {-}{+}{-}{-}, {-}{-}{+}{-}\}$, $\{{+}{-}{-}{-}, {-}{+}{-}{-}, {-}{-}{-}{+}\}$, $\{{+}{-}{-}{-}, {-}{+}{-}{-}, {-}{-}{-}{-}\}$, $\{{+}{-}{-}{-}, {-}{-}{+}{-}, {-}{-}{-}{-}\}$, $\{{+}{-}{-}{-}, {-}{-}{-}{+}, {-}{-}{-}{-}\}$, $\{{-}{+}{-}{-}, {-}{-}{+}{-}, {-}{-}{-}{-}\}$, $\{{+}{-}{-}{-}, {-}{+}{-}{-}, {-}{-}{+}{-}, {-}{-}{-}{+}\}$, $\{{+}{-}{-}{-}, {-}{+}{-}{-}, {-}{-}{+}{-}, {-}{-}{-}{-}\}$, $\{{+}{-}{-}{-}, {-}{+}{-}{-}, {-}{-}{-}{+}, {-}{-}{-}{-}\}$, $\{{+}{-}{-}{-}, {-}{+}{-}{-}, {-}{-}{+}{-}, {-}{-}{-}{+}, {-}{-}{-}{-}\}$
\end{lstlisting}


For rank $r=4$ there are millions of settings where all sign patterns of length 5 with 3 consecutive pluses are allowed. To showcase the power of our framework, we restricted our attention to the subclass in which all sign patterns with 2 consecutive pluses signs are allowed. Among the  \NUMSETTINGSFOURSELECTION non-isomorphic settings, our framework classified \NUMPROOFSFOUR as \NP-hard. 
All gadgets are given as supplemental data~\cite{supplemental_data}.

\begin{theorem}
\label{thm:rank4}
    For $r = 4$, $\calF$-\Complete is \NP-complete for \NUMPROOFSFOUR families $\calF$. 
\end{theorem}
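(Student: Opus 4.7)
The plan is to mirror the strategy used to establish \Cref{thm:rank3}, but applied to rank-4 sign mappings. For each of the \NUMSETTINGSFOURSELECTION non-isomorphic families $\calF$ in the benchmark selection, I would invoke Algorithm~\ref{algo} from \Cref{sec:algorithm} to search, via the SAT-based framework, for a collection of small partial rank-4 sign mappings that act as propagator, negator, and clause gadgets in the sense defined in \Cref{sec:prelim}. As in the rank-3 case, I would restrict the search to gadgets in which each Boolean variable is encoded by the sign on a tuple of consecutive elements of $[n]$; this ensures that the resulting partial mappings obey the structural compatibility required by the combination lemma (\Cref{lem:combination}), and it also keeps the individual SAT instances manageable by bounding the ground set of each gadget.

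Once Algorithm~\ref{algo} returns a valid gadget triple for a given $\calF$, I would combine the propagator, negator, and clause gadgets into a single partial sign mapping that encodes an arbitrary 3SAT instance $\varphi$ by an appeal to \Cref{lem:combination}. By the defining property of a gadget, this partial mapping extends to an $\calF$-avoiding rank-4 sign mapping on the combined ground set if and only if $\varphi$ is satisfiable, which gives \NP-hardness. Containment in \NP is immediate, since $\calF$-avoidance can be certified in polynomial time by inspecting all 5-element subsets of the ground set against the forbidden patterns, and the whole reduction has size polynomial in $\varphi$. Iterating this over the benchmark selection yields \NUMPROOFSFOUR families for which suitable gadgets are discovered; the concrete gadgets, one per family, are stored in the supplemental data~\cite{supplemental_data} so that each reduction can be verified mechanically.

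The main obstacle is computational rather than conceptual. In rank~4, forbidden patterns have length~$5$ and the space of partial sign mappings on $n$ elements has size roughly $3^{\binom{n}{4}}$, so naive enumeration is entirely infeasible even for modest gadget sizes. Overcoming this requires the CDCL-style pruning built into Algorithm~\ref{algo}, together with the structural restrictions described above (consecutive variable embeddings, bounded gadget size, and isomorphism-breaking within the search tree) so that only the smallest gadgets admitting a solution are actually examined. A secondary difficulty is bookkeeping: because the benchmark selection is strictly larger than the set of \NP-hard families we certify, the correctness claim must be read as \emph{existence of a gadget collection} for exactly the listed \NUMPROOFSFOUR families, rather than a statement about the remaining settings, whose complexity is left open.
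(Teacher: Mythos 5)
Your proposal follows essentially the same route as the paper: restrict to the benchmark of \NUMSETTINGSFOURSELECTION settings (chosen so that Lemma~\ref{lem:combination} applies), run Algorithm~\ref{algo} to find propagator and clause gadgets with variables encoded on consecutive tuples, combine them via the combination lemma exactly as in the rank-3 case, and record the \NUMPROOFSFOUR families for which gadgets were found, with machine-verifiable certificates in the supplemental data. Your reading of the theorem as an existence claim for exactly the certified families, leaving the remaining benchmark settings open, also matches the paper's intent.
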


From the gadgets in ranks 3 and~4, we managed to derive the a construction for general even rank $r$ where we avoid the two alternating sign pattern of length $r+1$.
The explicit description of the gadgets for the reduction is deferred to~\Cref{app:even_rank}.

\begin{theorem}
\label{thm:even_rank}
    For even $r \geq 4$, $\calF$-\Complete is \NP-complete with 
    $
    \calF = \{
    \underbrace{
    {+}{-}{+}\ldots{+}
    }_{\text{$r+1$ signs}}
    ,
    \underbrace{
    {-}{+}{-}\ldots{-}
    }_{\text{$r+1$ signs}}
    \}.
    $
\end{theorem}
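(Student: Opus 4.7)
The strategy is to reduce from 3SAT, mirroring the rank-3 argument of \Cref{sec:proof}. For each even rank $r \ge 4$, I would construct propagator, negator, and clause gadgets as partial rank-$r$ sign mappings on short intervals of consecutive elements, then apply \Cref{lem:combination} to assemble them into a full reduction. The variable encoding stays analogous to the rank-3 case: each Boolean variable is represented by the sign of a designated $r$-tuple of consecutive elements, so that the gadget interface required by the combination lemma is preserved.

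My plan is to obtain the rank-$r$ gadgets by \emph{lifting} the rank-3 gadgets for generalized signotopes. Concretely, each rank-3 gadget on triples $(a,b,c)$ gets extended to a rank-$r$ gadget by inserting $r-3$ auxiliary padding elements at fixed positions, with the signs of the new $r$-tuples involving padding chosen so that any occurrence of the forbidden rank-$r$ alternating pattern of length $r+1$ on an $(r+1)$-subset collapses to an occurrence of the rank-3 alternating pattern of length $4$ on the underlying quadruple of non-padding elements. The key structural observation driving this reduction is that the alternating signs on the ``padded positions'' of an $(r+1)$-subset are forced by the lifting rule and therefore either match the pattern for free or contradict it for free.

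The parity hypothesis is essential here. For even $r$, the two forbidden patterns of length $r+1$ both begin and end with the \emph{same} sign, which is what allows the lifting rule to specify the padding signs consistently at both endpoints of every $(r+1)$-subset that straddles padding and non-padding elements. For odd $r$ the endpoints would demand opposite signs and the lifting would fail to be well-defined, which is why the theorem is stated only for even rank.

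The main obstacle I expect is verifying that, for each lifted gadget and each Boolean assignment $f$ to the represented variables, the partial mapping $\sigma_f$ is completable to an $\calF$-avoiding rank-$r$ mapping if and only if $f$ satisfies the underlying sub-formula. This amounts to a careful case analysis over all $(r+1)$-subsets of the gadget's ground set: one direction shows that every completion of the lifted gadget restricts, on the non-padding elements, to a valid rank-3 completion of the original generalized-signotope gadget; the other direction shows conversely that any rank-3 completion extends via the lifting rule to a rank-$r$ completion that avoids both alternating patterns. Once this equivalence is established, \Cref{lem:combination} glues the gadgets into a full 3SAT reduction in the standard way, and the detailed gadget construction and verification are deferred to \Cref{app:even_rank}.
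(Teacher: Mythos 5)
Your high-level strategy (reduce from 3SAT, build propagator and clause gadgets, glue them with \Cref{lem:combination}) matches the paper, but the core of your argument --- obtaining the rank-$r$ gadgets by padding the rank-3 gadgets with $r-3$ auxiliary elements --- is not what the paper does, and as described it does not go through. The paper constructs the rank-$r$ gadgets directly: the propagator lives on $r+1$ consecutive elements and assigns the alternating signs $(-)^i$ to all interior $r$-tuples, so that exactly the two $r$-tuples encoding the variables remain free and the forbidden alternating pattern of length $r+1$ can arise only for one specific pair of their values; the clause gadget is an analogous explicit assignment on $r+2$ elements (see \Cref{app:even_rank}). No lifting from rank~3 is involved.

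The concrete gaps in the lifting idea are these. First, the existence of a sign assignment with your ``collapse'' property is exactly the hard part of the theorem, and you never exhibit one: an $(r+1)$-subset of the padded ground set collapses to a quadruple of original elements only if it contains \emph{all} $r-3$ padding elements; subsets containing fewer padding elements have five or more original elements, admit no underlying quadruple, and induce $r$-tuples (in particular ones consisting entirely of original elements) that receive no sign from the rank-3 gadget, so the deferred case analysis is not merely tedious but undefined. Second, the variable interface breaks: if a variable triple is represented by adjoining the padding, then either the padding is local to each gadget, in which case two gadgets sharing a variable encode it in \emph{different} $r$-tuples and never communicate, or the padding is global, in which case the variable $r$-tuples are not consecutive and the consecutivity hypothesis of \Cref{lem:combination} fails. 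Third, your parity argument points the wrong way: for even $r$ the forbidden patterns of length $r+1$ begin and end with the \emph{same} sign, whereas the rank-3 patterns of length $4$ begin and end with \emph{opposite} signs, so a collapse from one to the other is structurally mismatched at the endpoints. In the paper, parity enters instead because the interior alternating assignment $(-)^i$ must line up with both endpoints of the length-$(r+1)$ forbidden pattern so that the forced implication is the intended one.
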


\section{Proof of Theorem~\ref{thm:rank3} for generalized signotopes}
\label{sec:proof}
In this section, we prove Theorem~\ref{thm:rank3}.   
The completion problem is clearly contained in~\NP,
so it remains to prove the hardness,
for which we perform a reduction from 3SAT.
In the following we assume without loss of generality that no clause contains the same variable more than once as
any 3SAT formula can be extended by a linear number of auxiliary variables and clauses to ensure this property.    
The main idea is to introduce gadgets which are then combined using a \emph{combination lemma} (Lemma~\ref{lem:combination}).
Since the lemma works for all $\calF$-avoiding  rank 3 mappings with 
$\calF \subseteq \{
{+}{-}{+}{-},\allowbreak{}
{+}{-}{-}{+},\allowbreak{}
{+}{-}{-}{-},\allowbreak{}
{-}{+}{-}{+},\allowbreak{}
{-}{+}{-}{-},\allowbreak{}
{-}{-}{+}{-},\allowbreak{}
{-}{-}{-}{+},\allowbreak{}
{-}{-}{-}{-}\}$,
it only remains to find gadgets for all \NUMPROOFS setting to conclude \NP-completeness. 
To do this in a fully automated way, 
we used the algorithm described in Section~\ref{sec:algorithm}.
When the gadgets are known, they can be verified quite easily.
All settings for which we found the gadgets are given in \Cref{listing:families}.
The gadgets, a verification program, and a program to verify that all settings are non-isomorphic are provided as supplemental data.

In the following we give a formal proof
for generalized signotopes, that is, $\{{+}{-}{+}{-},\allowbreak {-}{+}{-}{+}\}$-avoiding sign mappings.
All other settings will be analogous; see Section~\ref{sec:others}.

Let $\phi$ be an instance of 3SAT in conjunctive normal form (CNF) on $n$ variables $v_1,\ldots,v_n$ with $m$ clauses 
$c_i = \ell_{i,1} \vee \ell_{i,2} \vee \ell_{i,3}$ with literals $\ell_{i,j} = v_k$ or $\ell_{i,j} = \neg v_k$ for some $k=k(i,j)$.
We create a partial generalized signotope $\sigma_\phi$ on $N=3n+5m$ elements
such that for each model $M$ of $\phi$ (i.e.\ an assignment to the variables such that all clauses are fulfilled)
there exists a completion $\sigma$ of $\sigma_\phi$ and vice versa.
We encode the values of a variable $v_i$ in the triple $V_i = (3i-2,3i-1,3i)$. 
that is,
$M(v_i)=\True$ if and only if 
$\sigma(V_i)=+$.    
A~clause $c_i$ is encoded by a gadget using triples of the 5-element subset
$C_i = \{3n+5i-4,\ldots,3n+5i\}$. 
The literals $\ell_{i,j}$ are stored in a triple in the clause gadget and connected to their variable counterpart in $[3n]$ such that 
the value from the clause gadget is synchronized with the variable gadget via propagator gadgets in the sense that, if $\ell_{i,j}$ is assigned to \True in~$M$,
then the variable $v_{k}$ must be set \True if $\ell_{i,j} = v_{k}$ and \False if $\ell_{i,j} = \neg v_{k}$.
The synchronization between the Boolean value and the sign of a triple in the partial mapping is such that \True is~$+$ and \False is~$-$.

\subsection{Clause-Gadget \texorpdfstring{$CG(X_1{+} \vee X_2{-} \vee X_3{+})$}{CG(X1+ v X2+ v X3+)}}

To encode a clause $x_1 \vee \neg x_2 \vee x_3$ of $\phi$ on the 5-element subset $C = \{c,\ldots, c+4\}$,
we encode the value of the literal $x_i$ in the triple $X_i := (c+i-1,c+i,c+i+1)$ for $i =1,2,3$.
The clause gadget is constructed such that the partial generalized signotope $\sigma_\phi$ restricted to the elements $\{c, \ldots, c+4\}$, denoted by $\sigma_C$,  is completable to $\sigma$ if and only if $\sigma(X_1) = + \vee \sigma(X_2)=- \vee \sigma(X_3) = +$. 
Note that it is irrelevant that the second literal of the clause is negated because we can negate it again later as part of a propagator gadget.
For the clause gadget we assign the following signs
\begin{align*}
    &\sigma_C(c,c+1,c+3)=+, \                 &&\sigma_C(c,c+1,c+4)=-, \\
    &\sigma_C(c,c+2,c+4)=-, \
    &&\sigma_C(c, c+3,c+4)=+,\\
    &\sigma_C(c+1,c+2,c+4) =+,\
    &&\sigma_C(c+1,c+3,c+4)=+,
\end{align*}
which have the following property:

\begin{claim}
    Every completion $\sigma$ of $\sigma_C$ fulfills $\sigma(X_1) = {+} \vee \sigma(X_2) = - \vee \sigma(X_3) = {+}$. 
    Moreover, 
    if we assign $\sigma_C(X_1) = {+}$, $\sigma_C(X_2) = -$, or $\sigma_C(X_3) = {+}$, then
    there exists a completion $\sigma$.
\end{claim}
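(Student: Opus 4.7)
The plan is to verify the claim by case analysis over the five 4-element subsets of $C = \{c, \ldots, c+4\}$. Observe that $\sigma_C$ leaves exactly four triples unset, namely $X_1, X_2, X_3$ and $Y := (c, c+2, c+3)$. On each 4-subset, the induced length-4 sign word (read in lexicographic order of the triples) must avoid the forbidden patterns ${+}{-}{+}{-}$ and ${-}{+}{-}{+}$; the task is to identify which joint sign assignments to $X_1, X_2, X_3, Y$ admit such a completion.

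First I would tabulate, for each of the five 4-subsets of $C$, the induced sign word obtained by substituting the six fixed signs of $\sigma_C$. The subset $\{c, c+1, c+3, c+4\}$ is fully determined and yields the word ${+}{-}{+}{+}$, which avoids both forbidden patterns. The subsets $\{c, c+1, c+2, c+4\}$ and $\{c+1, c+2, c+3, c+4\}$ produce the words $\sigma(X_1)\,{-}\,{-}\,{+}$ and $\sigma(X_2)\,{+}\,{+}\,\sigma(X_3)$ respectively; in each, the middle two entries are constant, so no alternating pattern can arise irrespective of the unset entries. The two remaining subsets $\{c, c+1, c+2, c+3\}$ and $\{c, c+2, c+3, c+4\}$ yield the informative words $\sigma(X_1)\,{+}\,\sigma(Y)\,\sigma(X_2)$ and $\sigma(Y)\,{-}\,{+}\,\sigma(X_3)$. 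Avoiding ${-}{+}{-}{+}$ on the first gives the constraint $(\sigma(X_1), \sigma(Y), \sigma(X_2)) \neq (-,-,+)$, and avoiding ${+}{-}{+}{-}$ on the second gives $(\sigma(Y), \sigma(X_3)) \neq (+,-)$; these are the only constraints on the unset triples.

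For the forward direction of the claim I would argue by contradiction: if $\sigma(X_1) = -$, $\sigma(X_2) = +$, and $\sigma(X_3) = -$, then the first constraint forces $\sigma(Y) = +$, and then the second forces $\sigma(X_3) = +$, contradicting $\sigma(X_3) = -$. Hence every completion satisfies $\sigma(X_1) = + \vee \sigma(X_2) = - \vee \sigma(X_3) = +$.

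For the converse, in each of the three permitted assignments I would exhibit an explicit completion of the three remaining unset triples consistent with both constraints. If $\sigma(X_1) = +$ or $\sigma(X_2) = -$ is assigned, the premise of the first constraint fails, so one may for instance set $\sigma(Y) = -$ and sign the two remaining $X_i$ by $+$; if instead $\sigma(X_3) = +$ is assigned, the second constraint is automatic and $\sigma(X_1) = \sigma(X_2) = \sigma(Y) = +$ suffices. The only real obstacle is bookkeeping: listing the ten triples of $C$ in lexicographic order and extracting the correct induced length-4 words for each of the five 4-subsets; once this enumeration is done, the argument collapses to two elementary propositional implications.
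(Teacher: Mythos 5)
Your proof is correct and follows essentially the same route as the paper: both identify the two informative $4$-subsets $\{c,c+1,c+2,c+3\}$ and $\{c,c+2,c+3,c+4\}$, extract the constraints $(\sigma(X_1),\sigma(Y),\sigma(X_2))\neq(-,-,+)$ and $(\sigma(Y),\sigma(X_3))\neq(+,-)$ with $Y=(c,c+2,c+3)$, check that the remaining three $4$-subsets are unconditionally valid, and chain the two constraints for the forward implication. Your version is merely more explicit in tabulating all five induced words and in exhibiting concrete completions for the moreover part.
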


\begin{proof}
    In a first step we consider the set $\{c,c+1,c+2,c+3\}$ which has the sequence
    $?{+}??$ since only the triple $\{c,c+1,c+3\}$ is assigned. 
    The only possible way to assign values such that this 4-subset becomes invalid, is to complete it to ${-}{+}{-}{+}$, which means
    $\sigma(X_1) = \sigma(c,c+1,c+2) = -$ and $\sigma(c,c+2,c+3) = -$ and $\sigma(X_2) = \sigma(c+1,c+2,c+3) = +$.
    Hence in a completion $\sigma$ it holds
    $\sigma(X_1) = \sigma(c,c+1,c+2) = {+}$ or $\sigma(c,c+2,c+3) = {+}$ or $\sigma(X_2) = \sigma(c+1,c+2,c+3) = -$.
    In a next step, we look at the set $\{c,c+2,c+3,c+4\}$
    to make a connection between the triples $(c,c+2,c+3)$ and $X_3 = (c+2,c+3,c+4)$.
    The sign sequence of this set is $?-{+}?$.
    Again the there is only one possibility to make this an avoiding sequence. Hence in every completion $\sigma$ it is
    $\sigma(c,c+2,c+3) = -$ or $\sigma(c+2,c+3,c+4) = {+}$. 
    In other words $\sigma(c,c+2,c+3)={+}$ implies $\sigma(X_3) = \sigma(c+2,c+3,c+4)={+}$.
    Together this shows that for every completion $\sigma$ it holds $\sigma(X_1) = {+} \allowbreak{} \vee \sigma(X_2) = - \vee \sigma(X_3) = {+}$.
    On the other hand, note that the other $4$-tuples are always fine since 
    \begin{align*}
        \sigma_C|_{(c,c+1,c+2,c+4)} = ?{-}{-}{+}, \ 
        \sigma_C|_{(c,c+1,c+3,c+4)}={+}{-}{+}{+}, \ 
        \sigma_C|_{(c+1,c+2,c+3,c+4)}= ?{+}{+}?. 
    \end{align*}
    This shows that if the clause is \True, then there is always a completion. 
\end{proof}

\subsection{Propagator-Gadget \texorpdfstring{$PG(a_2a_3a_4{+} \rightarrow a_1a_2a_3{+})$}{PG(a2a3a4+ -> a1a2a3+)}} 

For this propagation, 
we assign values to triples of the set 
$P=\{a_1, \ldots, a_4\}$. 
The restricted mapping is denoted by $\sigma_P$. We assign
$ \sigma_P(a_1,a_2,a_4) ={+}$ and $\sigma_P(a_1,a_3,a_4) = {-}$.

\begin{claim}
    Every completion $\sigma$ of $\sigma_P$ with 
    $\sigma(a_2,a_3,a_4) = {+}$ fulfills $\sigma(a_1,a_2,a_3) = {+}$.
    Moreover, 
    if we additionally assign $\sigma_P(a_2,a_3,a_4) = {-}$ or $\sigma_P(a_1,a_2,a_3) = {+}$, then
    there exists a completion~$\sigma$.
\end{claim}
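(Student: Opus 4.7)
The plan is to exploit the fact that the gadget lives on only four elements, so that completability of $\sigma_P$ on $P = \{a_1,a_2,a_3,a_4\}$ reduces to checking a single $4$-tuple sign sequence against the two forbidden patterns ${+}{-}{+}{-}$ and ${-}{+}{-}{+}$. Listing the four triples in lexicographic order, the partial sequence reads
\[
\sigma_P|_{(a_1,a_2,a_3,a_4)} \;=\; \sigma(a_1,a_2,a_3)\,\sigma(a_1,a_2,a_4)\,\sigma(a_1,a_3,a_4)\,\sigma(a_2,a_3,a_4) \;=\; ?\,{+}\,{-}\,?,
\]
where the two unknowns sit in positions $1$ and $4$.

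For the first statement, I would substitute $\sigma(a_2,a_3,a_4) = {+}$ to obtain the partial sequence $?\,{+}\,{-}\,{+}$. A leading $-$ would produce the forbidden pattern ${-}{+}{-}{+}$, so any valid completion must have $\sigma(a_1,a_2,a_3) = {+}$, exactly the propagation claimed.

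For the second statement, I would verify each hypothesis by exhibiting an explicit completion. If $\sigma_P(a_2,a_3,a_4) = -$, then extending with $\sigma(a_1,a_2,a_3) = {+}$ gives the sequence ${+}{+}{-}{-}$, which avoids both forbidden patterns. If instead $\sigma_P(a_1,a_2,a_3) = {+}$, then extending with $\sigma(a_2,a_3,a_4) = {+}$ gives ${+}{+}{-}{+}$, also safe. Since $|P|=4$, no further $4$-subset needs to be checked, so there is essentially no combinatorial obstacle. The only point that requires minor care is correctly reading off positions in lexicographic order, so that the forbidden-pattern analysis is applied at the right indices; once this is set up the proof is a brief finite case check.
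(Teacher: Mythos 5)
Your proof is correct and follows essentially the same route as the paper: both read off the single $4$-tuple sign sequence $?{+}{-}?$ and note that setting the last entry to ${+}$ forces the first entry to ${+}$ to avoid ${-}{+}{-}{+}$. You additionally spell out explicit completions (${+}{+}{-}{-}$ and ${+}{+}{-}{+}$) for the ``moreover'' part, which the paper leaves implicit, but this is the same finite check, not a different argument.
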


\begin{proof}
    The 4-subset $a_1, a_2,a_3,a_4$ has the sign sequence $?{+}{-}?$. 
    Hence in order to avoid the forbidden pattern ${-}{+}{-}{+}$, assigning $\sigma(a_2,a_3,a_4) = {+}$ implies $\sigma(a_1,a_2,a_3) = {+}$.
\end{proof}

Additionally,  a propagator gadget for $\sigma(a_2,a_3,a_4) = {-} \rightarrow \sigma(a_1,a_2,a_3) = {-}$ is obtained by the opposite signs of the triple orientations:  $\sigma_P(a_1,a_2,a_4) = {+}$,  $\sigma_P(a_1,a_3,a_4) = {-}$ (short: $PG(a_2a_3a_4{-} \rightarrow a_1a_2a_3{-})$).

\subsection{Propagator-Gadget  \texorpdfstring{$PG(a_3a_4a_5{-} \rightarrow a_1a_2a_3 {+})$}{PG(a3a4a5- -> a1a2a3+)}}
In order to negate variables, we also need to negate the value of a triple.
We encode the gadget on the five elements $ P = \{a_1, \ldots, a_5\}$
and assign:
\begin{align*}
    &\sigma_P(a_1,a_2,a_4) = {+},\
    &\sigma_P(a_1,a_2,a_5) = {-},\
    &&\sigma_P(a_1,a_3,a_5) = {-},\\
    &\sigma_P(a_1,a_4,a_5) = {+},\
    &\sigma_P(a_2,a_3,a_4) = {+},\
    &&\sigma_P(a_2,a_3,a_5) = {+},\\
    &\sigma_P(a_2,a_4,a_5) = {+}.
\end{align*}

\begin{claim}
    Every completion $\sigma$ of $\sigma_P$ with 
    $\sigma(a_3,a_4,a_5) = {-}$ fulfills $\sigma(a_1,a_2,a_3) = {+}$.
    Moreover, 
    if we additionally assign $\sigma_P(a_3,a_4,a_5) = {+}$ or $\sigma_P(a_1,a_2,a_3) = {+}$, then
    there exists a completion~$\sigma$.
\end{claim}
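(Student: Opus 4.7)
The plan is to mimic the structure of the previous propagator-gadget proof by chaining forced signs through two overlapping $4$-element subsets of $P$, using only the forbidden patterns ${+}{-}{+}{-}$ and ${-}{+}{-}{+}$. Note that exactly three triples of $P$ are unset, namely $(a_1,a_2,a_3)$, $(a_1,a_3,a_4)$, and $(a_3,a_4,a_5)$.

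For the forward implication, suppose $\sigma(a_3,a_4,a_5)={-}$. I would first inspect the $4$-subset $\{a_1,a_3,a_4,a_5\}$, whose sign sequence is
$\sigma(a_1,a_3,a_4),\sigma(a_1,a_3,a_5),\sigma(a_1,a_4,a_5),\sigma(a_3,a_4,a_5) = ?,{-},{+},{-}.$
To avoid the pattern ${+}{-}{+}{-}$, this forces $\sigma(a_1,a_3,a_4)={-}$. Having fixed this value, I would then look at the $4$-subset $\{a_1,a_2,a_3,a_4\}$, whose sign sequence becomes ${?,{+},{-},{+}}$. To avoid the pattern ${-}{+}{-}{+}$, this forces $\sigma(a_1,a_2,a_3)={+}$, as required.

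For the converse, I would show that setting the three unset triples to $+$ always yields a valid completion, no matter which extra assignment is imposed. Since this completion already makes $\sigma(a_1,a_2,a_3)={+}$ and $\sigma(a_3,a_4,a_5)={+}$, it is compatible with either of the two hypotheses in the claim. The verification is a routine check of all five $4$-subsets of $P$: with all unset triples set to $+$, the subsets $\{a_1,a_2,a_3,a_4\}$ and $\{a_2,a_3,a_4,a_5\}$ carry the sign sequence ${+}{+}{+}{+}$; the subset $\{a_1,a_3,a_4,a_5\}$ carries ${+}{-}{+}{+}$; the subset $\{a_1,a_2,a_3,a_5\}$ carries ${+}{-}{-}{+}$; and $\{a_1,a_2,a_4,a_5\}$ already had the sequence ${+}{-}{+}{+}$. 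None of these contain a forbidden pattern.

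The main obstacle is to identify, among several potential $4$-subsets of $P$, the right pair that produces a forcing chain, and to ensure that the preset signs were chosen so that the \emph{only} active forbidden pattern on the intermediate subset has precisely one unset entry. Once the chain $\{a_1,a_3,a_4,a_5\}\to\{a_1,a_2,a_3,a_4\}$ is identified the rest is mechanical. Checking the remaining $4$-subsets (which should neither force anything nor forbid the all-$+$ completion) is routine and only needs to be done once.
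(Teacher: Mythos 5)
Your forward direction is exactly the paper's argument: the same chain through the $4$-subsets $\{a_1,a_3,a_4,a_5\}$ (sequence $?{-}{+}?$, forcing $\sigma(a_1,a_3,a_4)={-}$) and $\{a_1,a_2,a_3,a_4\}$ (sequence $?{+}?{+}$, forcing $\sigma(a_1,a_2,a_3)={+}$), and your sign computations check out. For the ``moreover'' part the paper observes that the three remaining $4$-subsets are never violated, whereas you exhibit the single all-${+}$ completion; both suffice for the claim as literally stated. One small caveat: your fixed witness only covers the case where the non-hypothesized variable triple is free or also set to ${+}$, so it does not by itself handle the mixed assignments $\sigma(a_1,a_2,a_3)={+},\ \sigma(a_3,a_4,a_5)={-}$ (resp.\ ${-},{+}$) that the overall reduction needs; there you must choose the intermediate triple as $\sigma(a_1,a_3,a_4)={-}$ (resp.\ ${+}$), which the paper's ``only two $4$-subsets ever constrain anything'' viewpoint makes immediate.
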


\begin{proof}
    Combine the conditions of the 4-tuples $a_1,a_2,a_3,a_4$ and $a_1,a_3,a_4,a_5$.
    The sequence of the latter one is $?{-}{+}?$. Hence $\sigma(a_3,a_4,a_5) = {-}$ implies that $\sigma(a_1,a_3,a_4) = {-}$.
    Furthermore the sign sequence of $a_1,a_2,a_3,a_4$ is $?{+}?{+}$, which shows that $\sigma(a_1,a_3,a_4) = {-}$  implies $\sigma(a_1,a_2,a_3) = {+}$. 
    All other 4-tuples are always fine since
    \begin{align*}
        \sigma_P|_{(a_1,a_2,a_3,a_5)} = ?{-}{-}{+},  \quad 
        \sigma_P|_{(a_1,a_2,a_4,a_5)} =  {+}{-}{+}{+},  \quad 
        \sigma_P|_{(a_2,a_3,a_4,a_5)}= {+}{+}{+}? 
    \end{align*}
    which shows the moreover part.
\end{proof}

Again,  a propagator gadget for $\sigma(a_3,a_4,a_5) = {+} \rightarrow \sigma(a_1,a_2,a_3) = {-}$ is modeled by the opposite signs of the triple orientations and denoted by $PG(a_3a_4a_5{+} \rightarrow a_1a_2a_3{-})$.\\

For the sake of readability we introduce larger propagator gadgets consisting of two or three already defined propagator gadgets. 
In order to propagate a value from a clause to a variable, we need a propagator gadget which is able to propagate the value from the triple $(b_1,b_2,b_3)$ to the triple $(a_1,a_2,a_3)$ for every $a_1<a_2<a_3<b_1<b_2<b_3$.
For this we define the following four propagator gadgets consisting of the already introduced propagator gadgets as illustrated in Figure~\ref{fig:propagators}. 

\begin{figure}[htb]
\centering
    \begin{subfigure}[t]{.46\textwidth}
        \centering
        \includegraphics[page=1]{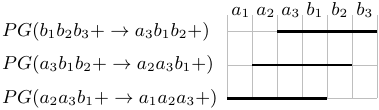}
        \caption{$PG(b_1b_2b_3 {+} \rightarrow a_1a_2a_3 {+})$}
        \label{subfig:prop++}
    \end{subfigure} 
\qquad
    \begin{subfigure}[t]{.46\textwidth}
        \centering
        \includegraphics[page =2]{propagator_6_small.pdf}
        \caption{$PG(b_1b_2b_3 {-} \rightarrow a_1a_2a_3 {-})$}
        \label{subfig:prop--}
    \end{subfigure} 
 \\[2ex]
 
    \begin{subfigure}[t]{.46\textwidth}
        \centering
        \includegraphics[page =3]{propagator_6_small.pdf}
        \caption{$PG(b_1b_2b_3 {+} \rightarrow a_1a_2a_3 {-})$}
        \label{subfig:prop+-}
    \end{subfigure} 
\qquad
    \begin{subfigure}[t]{.46\textwidth}
        \centering
        \includegraphics[page =4]{propagator_6_small.pdf}
        \caption{$PG(b_1b_2b_3 {-} \rightarrow a_1a_2a_3 {+})$}
        \label{subfig:prop-+}
    \end{subfigure} 
    \caption{Construction of the four propagator gadgets on 6 elements. }
    \label{fig:propagators}
\end{figure}

\subsection{Combination of Gadgets}

To build the desired partial generalized signotope $\sigma_\phi$, we add a clause gadget on the 5-element subset $C_i = \{3n+5i-4,\ldots,3n+5i\}$ for each clause $c_i$.
For each literal, we propagate either the same value or the opposite value to the corresponding variable. 
Hence, we add one of the four propagator gadgets depending on the literal $\ell_{i,j}$.
Recall that $\ell_{i,1}$ corresponds to $L_{i,1} = (c_i,c_i+1,c_i+2)$, $\neg \ell_{i_2}$ to $L_{i,2}= (c_i+1,c_i+2,c_i+3)$, and $\ell_{i,3}$ to $L_{i_3} =(c_i+2,c_i+3,c_i+4)$.
\begin{align*}
    \ell_{i,j}=v_k:\ &PG(L_{i,j}{+} \rightarrow V_k {+}) \text{ for } j\in\{1,3\},  \\
 &PG(L_{i,2} {-} \rightarrow V_k {+})\\ 
    \ell_{i,j}=\neg v_k:\   &PG(L_{i,j} {+} \rightarrow V_k {-}) \text{ for } j\in\{1,3\}, \\
 &PG(L_{i,2} {-} \rightarrow V_k{-})
\end{align*}
With this it holds
\begin{claim}
    If there is a completion $\sigma$ of $\sigma_\phi$, then there is a model $M$ of $\phi$.
\end{claim}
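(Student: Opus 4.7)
My plan is to define the model $M$ from the completion in the obvious way and then verify, clause by clause, that $M$ satisfies $\phi$. Concretely, I would set $M(v_k) = \True$ precisely when $\sigma(V_k) = {+}$, and $M(v_k) = \False$ otherwise. The task then reduces to showing that under this $M$, every clause $c_i$ has at least one literal evaluating to \True.

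To this end, I would fix a clause $c_i$ and apply the clause-gadget claim to the 5-element window $C_i$: any completion $\sigma$ must satisfy $\sigma(L_{i,1}) = {+}$, $\sigma(L_{i,2}) = {-}$, or $\sigma(L_{i,3}) = {+}$. Each of these three scenarios is precisely the activating input of one of the propagator gadgets inserted by the reduction between $L_{i,j}$ and $V_k$, where $k = k(i,j)$. A short case distinction on which literal fires and on whether $\ell_{i,j}$ is a positive or a negated occurrence of its variable then shows that $\sigma(V_k)$ is forced to exactly the sign that makes $\ell_{i,j}$ evaluate to \True under $M$. For instance, if $\ell_{i,1} = v_k$ and $\sigma(L_{i,1}) = {+}$, then $PG(L_{i,1}{+} \rightarrow V_k{+})$ forces $\sigma(V_k) = {+}$, so $M(v_k) = \True$ and the first literal is satisfied; the remaining cases (with $\neg v_k$, or with $j = 2$ where the triggering sign is $-$) are handled symmetrically via the appropriate $PG$.

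I do not expect any serious obstacle inside the proof of this claim itself --- it is essentially a bookkeeping exercise once the clause-gadget and propagator-gadget claims are in hand. The genuine subtleties of the reduction lie elsewhere, namely (i) in the compound propagator gadgets of \Cref{fig:propagators}, where one has to check that chaining together the basic 4- and 5-element propagators indeed carries the sign from $(b_1,b_2,b_3)$ all the way down to $(a_1,a_2,a_3)$ for the orderings actually arising in the reduction, and (ii) in arguing that the many gadgets inserted into $\sigma_\phi$ do not impose unintended constraints on the shared coordinates of the $V_k$ and $L_{i,j}$ --- this is precisely the content of the combination lemma (\Cref{lem:combination}) invoked at the start of the section. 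With those two ingredients available, the present claim follows by iterating the above argument over all $m$ clauses to conclude that $M$ is indeed a model of~$\phi$.
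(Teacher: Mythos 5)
Your proof is correct and follows exactly the route the paper intends: the paper states this claim with no written proof beyond ``With this it holds,'' treating it as an immediate consequence of the clause-gadget and propagator-gadget claims, and your argument (read off $M$ from the signs of the $V_k$, let the clause gadget force one of $\sigma(L_{i,1})={+}$, $\sigma(L_{i,2})={-}$, $\sigma(L_{i,3})={+}$, and let the matching propagator transfer that sign to $V_{k(i,j)}$) is precisely the bookkeeping being elided. Your placement of the real difficulties in the compound propagators and the combination lemma also matches the paper's own emphasis.
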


However, for the reverse we need  to complete the signs of triples of different gadgets. 
Let $A\subset B \subseteq [N]$. Then $A$ is \emph{consecutive} within $B$ if no $b\in B\setminus A$ and $a_1,a_2\in A$ exist, such that $a_1<b<a_2$.

\begin{lemma}[The Combination Lemma]
    \label{lem:combination}
    Let $r \ge 3$, let
 $\calF$ be a set of words of length $r+1$ on the symbols $\{+,-\}$
 such that no word contains $r-1$ consecutive $+$ symbols,
 and let $\mathcal{I}\subseteq 2^{[N]}$ such that $I\cap J$ is consecutive within $I\cup J$ for every $I,J \in \mathcal{I}$.
    If there exist
    $\calF$-avoiding mappings $\sigma_I$ on~$I$ for all $I \in \mathcal{I}$
    such that $\sigma_I$ and $\sigma_J$ agree on        $I\cap J$ for every $I,J \in \mathcal{I}$, 
    then
    the following rank~$r$ sign mapping on~$[N]$ is $\calF$-avoiding:
    \begin{align*}
        \sigma(X) = \begin{cases}
            \sigma_I(X) & \text{if } X \subseteq I \text{ for an } I  \in \mathcal{I}\\
            {+} & \text{otherwise.}
        \end{cases}
    \end{align*}
\end{lemma}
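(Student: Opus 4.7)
The plan is to verify, for every $(r+1)$-element subset $X'=\{x_1<\cdots<x_{r+1}\}\subseteq[N]$, that the word $\sigma|_{X'}$ avoids $\calF$. First, I would check that $\sigma$ is well-defined: if $X\subseteq I\cap J$ then $\sigma_I(X)=\sigma_J(X)$ since $\sigma_I$ and $\sigma_J$ agree on $I\cap J$. If $X'\subseteq I$ for some $I\in\mathcal{I}$, then $\sigma|_{X'}=\sigma_I|_{X'}$, which is $\calF$-avoiding by hypothesis. So I focus on the remaining case $X'\not\subseteq I$ for all $I\in\mathcal{I}$. Here I plan to exploit the hypothesis that no pattern in $\calF$ has $r-1$ consecutive $+$ symbols, by producing such a run in $\sigma|_{X'}$.

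Writing $Y_i:=X'\setminus\{x_i\}$, a short check shows the $r$-subsets of $X'$ appear in lexicographic order as $Y_{r+1},Y_r,\ldots,Y_1$, so $\sigma|_{X'}=\sigma(Y_{r+1})\sigma(Y_r)\cdots\sigma(Y_1)$. My goal becomes showing that
\[
S:=\{i\in[r+1]: Y_i\subseteq I\text{ for some }I\in\mathcal{I}\}
\]
is contained in one of $\{1,2\}$, $\{1,r+1\}$, or $\{r,r+1\}$. Once this is established, the complement $[r+1]\setminus S$ contains $r-1$ consecutive indices, each of which corresponds to a position where $\sigma$ takes value $+$ (by the definition of $\sigma$, since $Y_i$ is in no $I$), yielding the desired run of $r-1$ consecutive $+$'s.

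The heart of the argument is the following claim for any pair $i<j$ with $i,j\in S$. Pick $I,J\in\mathcal{I}$ with $Y_i\subseteq I$ and $Y_j\subseteq J$; necessarily $I\neq J$, since otherwise $X'=Y_i\cup Y_j\subseteq I$ contradicts our case assumption. From $x_i\in Y_j\subseteq J$ one has $x_i\in J$; moreover, if $x_i\in I$ then $X'=Y_i\cup\{x_i\}\subseteq I$, again a contradiction. Hence $x_i\in(I\cup J)\setminus(I\cap J)$, and symmetrically $x_j\in(I\cup J)\setminus(I\cap J)$, while every other $x_k$ lies in $Y_i\cap Y_j\subseteq I\cap J$. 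The hypothesis that $I\cap J$ is consecutive within $I\cup J$ means that, in the linear order of $[N]$, $I\cap J$ forms an interval $[m_1,m_2]$ of $I\cup J$. Thus each of $x_i,x_j$ lies strictly below $m_1$ or strictly above $m_2$. Because all $x_k$ with $k\notin\{i,j\}$ lie in $[m_1,m_2]$, if $x_i<m_1$ then no $x_k$ with $k<i$ and $k\ne j$ exists, forcing $i=1$; similarly $x_i>m_2$ forces $(i,j)=(r,r+1)$, $x_j<m_1$ forces $(i,j)=(1,2)$, and $x_j>m_2$ forces $j=r+1$. Combining the two alternatives for $x_i$ with those for $x_j$ (and using $r\geq 3$ to rule out the degenerate overlap) leaves exactly the three allowed pairs $(1,2),(1,r+1),(r,r+1)$.

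From this pair-restriction, a finite check excludes $|S|\geq 3$: any three elements drawn from $\{1,2,r,r+1\}$ contain a pair outside the allowed list (such as $(2,r)$, $(1,r)$, or $(2,r+1)$ when $r\geq 3$). Hence $|S|\leq 2$ and $S$ is one of the three listed sets, so $\sigma|_{X'}$ has a block of $r-1$ consecutive $+$'s, which cannot match any $\calF$-pattern by hypothesis. The main obstacle I anticipate is the case analysis for the pair-restriction: in particular the subtle sub-case in which $x_i$ and $x_j$ lie on the \emph{same} side of $[m_1,m_2]$, which is what produces the ``boundary'' pairs $(1,2)$ and $(r,r+1)$ and therefore prevents the cleaner conclusion $S\subseteq\{1,r+1\}$.
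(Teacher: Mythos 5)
Your overall strategy is the same as the paper's: check each $(r+1)$-subset, use the consecutiveness of $I\cap J$ inside $I\cup J$ to pin down which $r$-subsets can be covered by members of $\mathcal{I}$, and conclude that the uncovered positions form a run of $r-1$ consecutive ${+}$ signs. Your treatment of the main case is in fact more careful than the paper's: the reduction of every covered pair to one of $(1,2)$, $(1,r+1)$, $(r,r+1)$, followed by the observation that these three pairs admit no common triple, is a clean rendering of the paper's terse ``all three pairs would have to choose a different one of these 3 options'' argument, and you also avoid the paper's implicit assumption that $|I\cap J|=r-1$ by working with $\min(I\cap J)$ and $\max(I\cap J)$.

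However, the step ``Hence $|S|\leq 2$ and $S$ is one of the three listed sets'' has a gap: the pair-restriction says nothing when $|S|\le 1$. If $S=\{k\}$ is a singleton with $3\le k\le r-1$, then $S$ is contained in none of $\{1,2\}$, $\{1,r+1\}$, $\{r,r+1\}$; the word is ${+}^{r+1-k}\,{*}\,{+}^{k-1}$ with the lone non-default sign in a middle position, and for $r\ge 4$ this need not contain $r-1$ consecutive ${+}$ signs (e.g.\ ${+}{+}{-}{+}{+}$ for $r=4$, $k=3$). For $r=3$ the range $3\le k\le r-1$ is empty, so your argument is complete in that case. You should be aware that the paper's own proof has exactly the same hole: it dismisses the one-covered-subset case by noting that ``there is at most one sign in the sequence which is not ${+}$'', which likewise forces $r-1$ consecutive ${+}$ signs only when $r=3$. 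Indeed, taking $r=4$, $\mathcal{I}=\{\{1,2,4,5\}\}$, $\sigma_{\{1,2,4,5\}}(1,2,4,5)={-}$ and $\calF=\{{+}{+}{-}{+}{+}\}$ satisfies every stated hypothesis yet yields the forbidden word on $[5]$, so for $r\ge 4$ this case must be excluded by an additional assumption (or by the structure of the gadgets in the intended application) rather than argued away; neither your proposal nor the paper does this.
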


Note that for $r =3$ the lemma requires that the 4-subsets in which two consecutive ${+}$ signs appear in the sign sequence are valid. 
This is the case if 
$\calF \subseteq \{{+}{-}{+}{-},{+}{-}{-}{+},\allowbreak{}{+}{-}{-}{-},\allowbreak{}{-}{+}{-}{+},\allowbreak{}{-}{+}{-}{-},\allowbreak{}{-}{-}{+}{-},\allowbreak{}{-}{-}{-}{+},\allowbreak {-}{-}{-}{-}\}$.
Also note that by reversing all signs in the statement of Lemma~\ref{lem:combination} (i.e.,\  exchange the roles of $+$ and~$-$), one obtains an analogous combination lemma.

\begin{proof}
To show this, we check the $(r+1)$-subsets $P=\{x_1 <x_2,\ldots<x_{r+1}\}$. 
If no $I \in \mathcal{I}$ contains $r$ elements of $P$, then the sign sequence corresponding to $P$ is clearly $\calF$-avoiding.
If $P \subseteq I$ for some $I \in \mathcal{I}$, the statement follows since $\sigma_I$ is $\calF$-avoiding. 
If $r$ of the elements are in a common set $I$ but there is no other $r$-element subset of $P$ which is contained in an element of $\mathcal{I}$, then there is at most one sign in the sequence which is not $+$. This shows that in this case the $(r+1)$-subset is valid.  
For the remaining case, let $i\in J\setminus I$, $j\in I\setminus J$, without loss of generality $i<j$, and $I\cap J = A = \{a_1,\ldots,a_{r-1}\}$ for some $I, J \in \mathcal I$.
Since the elements $a_1,\ldots,a_{r-1}$ are consecutive in the ordered $(r+1)$-subset, there are only 3 options at what positions $i$ and $j$ can be, which is at the first two positions, the last two positions or the first and the last position. This also shows that it is impossible for an $r$-tuple other than $\{i\}\cup A$ and $\{j\}\cup A$ to be in a common family $K\in\mathcal I$. 
Otherwise all three pairs of missed elements $i,j$ and $k$ would have to choose a different one of these 3 options, which is~impossible.
 
 Hence there are only 3 cases left, depending on where our $r-1$ consecutive elements $A = I \cap J$ are in the $(r+1)$-element subset $P$:
\begin{align*}
    \sigma|_{(a_1,\ldots,a_{r-1},i,j)}=**{+}\ldots{+}, \
    \sigma|_{(i,a_1,\ldots,a_{r-1},j)}=*{+}\ldots{+}*,\
    \sigma|_{(i,j,a_1,\ldots,a_{r-1})}={+}\ldots{+}**,
\end{align*}
where $*$ denotes some value given by $\sigma_I$ or $\sigma_J$. 
Hence there are $r-1$ consecutive $+$ signs which shows that the partial sign mapping is $\calF$-avoiding.
\end{proof}

\begin{claim}
    For every model $M$ of $\phi$ there is a completion of $\sigma_\phi$.
\end{claim}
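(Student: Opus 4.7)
The plan is to build, given a satisfying assignment $M$ of $\phi$, an $\calF$-avoiding completion $\sigma$ of $\sigma_\phi$ in three stages: fix the signs on the variable and literal triples according to $M$, locally extend each atomic gadget via its moreover part, and then glue everything together with Lemma~\ref{lem:combination}.

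First I would set $\sigma(V_k) = {+}$ exactly when $M(v_k) = \True$, and I would set $\sigma(L_{i,j})$ so that it matches the convention of the clause gadget $CG(X_1{+} \vee X_2{-} \vee X_3{+})$: namely $\sigma(L_{i,j}) = {+}$ encodes ``$\ell_{i,j}$ true'' for $j \in \{1,3\}$, while $\sigma(L_{i,2}) = {-}$ encodes ``$\ell_{i,2}$ true''. With this convention the two distinguished triples of every propagator gadget are set to values that satisfy the implication encoded by the propagator: if the literal is true, both hypothesis and conclusion hold, and if it is false, the hypothesis of the implication is violated so the implication is vacuous. In either case, the moreover part of the relevant propagator claim produces an $\calF$-avoiding extension on that propagator's index set.

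For the clause gadget on $C_i$, since $M$ satisfies $c_i$, at least one of $\sigma(L_{i,1}) = {+}$, $\sigma(L_{i,2}) = {-}$, $\sigma(L_{i,3}) = {+}$ holds. The only triple of $C_i$ still free after Step~1 and the gadget's initial assignments is $(c_i, c_i+2, c_i+3)$; the only two 4-subsets of $C_i$ whose sign sequences are not already forced to be valid purely by the initial assignments are $\{c_i, c_i+1, c_i+2, c_i+3\}$ and $\{c_i, c_i+2, c_i+3, c_i+4\}$. A short case analysis on these two subsets shows that setting $\sigma(c_i, c_i+2, c_i+3) = {+}$ works whenever the clause is satisfied via $\ell_{i,3}$, and setting it to ${-}$ works whenever the clause is satisfied via $\ell_{i,1}$ or $\ell_{i,2}$; in both cases the restriction $\sigma|_{C_i}$ is $\calF$-avoiding.

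Finally I would apply Lemma~\ref{lem:combination} with $\mathcal{I}$ the collection of all variable triples $V_k$, all clause 5-sets $C_i$, and all propagator index sets. Pairwise intersections are either empty, a single $V_k$, a single $L_{i,j}$, or a $1$- or $2$-element sub-interval of a common $C_i$, each of which is consecutive within the union; and by construction the local extensions agree on these intersections because both sides read off the signs of $\sigma(V_k)$ and $\sigma(L_{i,j})$ that were fixed in Step~1. The combination lemma then assembles the local extensions into an $\calF$-avoiding sign mapping on $[N]$ extending $\sigma_\phi$. The main obstacle I expect is bookkeeping: keeping the sign conventions consistent across the negation of literals, the sign flip at $j = 2$, and the different implication directions of the four propagator variants, so that the hypothesis/conclusion pairs really do end up in one of the admissible combinations for every propagator.
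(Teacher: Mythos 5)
Your proposal is correct and follows essentially the same route as the paper: assign the signs of the variable and literal triples according to $M$ (with the sign flip at the middle literal), invoke the ``moreover'' parts of the gadget claims for the local completions, check that the gadgets agree on their pairwise intersections and that these intersections are consecutive, and then glue with Lemma~\ref{lem:combination}. The extra case analysis you do for the free triple $(c_i,c_i+2,c_i+3)$ of the clause gadget merely re-derives the moreover part of the clause-gadget claim, so nothing is gained or lost relative to the paper's argument.
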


\begin{proof}
    For a model~$M$,
    we assign the corresponding values of the variables and literals to the triples in~$\sigma_\phi$.
    For each gadget in $\sigma_\phi$ on a subset $S\subset [N]$, there is a completion of $\sigma_S$. 
    Since any two gadgets intersect in at most
    a triple which encodes the sign of a variable or literal,
    the gadgets agree on their intersection.
    Furthermore, the needed consecutivity is fulfilled. 
    Propagator gadgets on 6 elements consist of smaller ones with a consecutive intersection, 
    so they can be completed using the combination lemma~\ref{lem:combination}. 
    Two propagator gadgets on 6 elements intersect either in the three elements corresponding to the common variable 
    or they intersect in at most three consecutive elements in the clause gadget. 
    Recall that the one variable does not occur more than once per clause.
    See Figure \ref{fig:intersect} for an illustration of the intersections.
    Thus we can apply the combination lemma~\ref{lem:combination} and obtain the desired completion of $\sigma_\phi$. 
    This completes the reduction.
\end{proof}

\begin{figure}[htb]
    \centering
    \includegraphics[page = 2,scale=0.85]{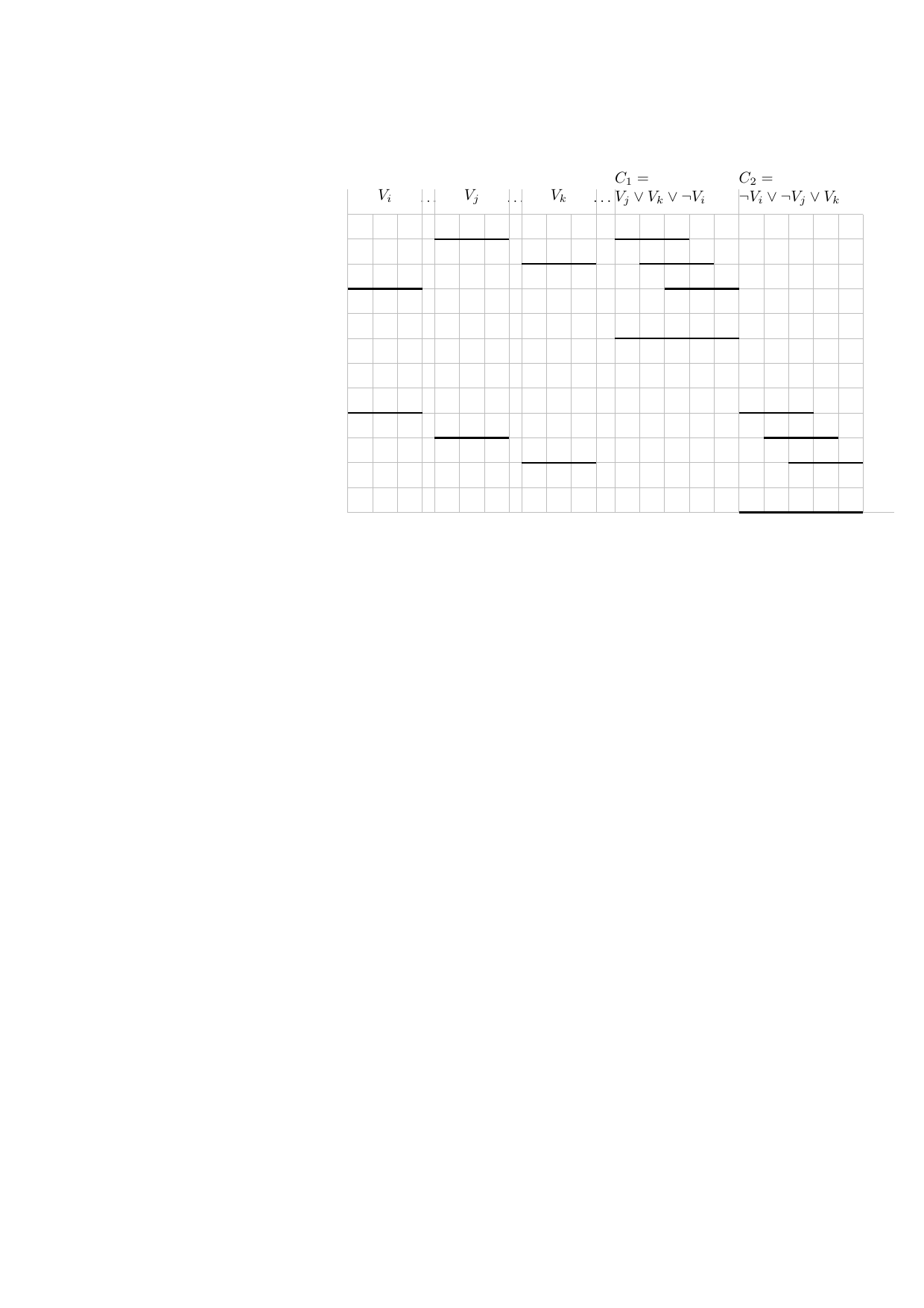}
    \caption{Intersection of propagator gadgets and clause gadgets. Clause gadgets do not intersect.}
    \label{fig:intersect}
\end{figure}

\section{Algorithm to find Gadgets}
\label{sec:algorithm}


Our goal is to find gadgets for a reduction from 3SAT to a specific completion problem. The used notation is introduced in \Cref{sec:prelim}.
The gadget-finding problem can be expressed as
\begin{align*}  
    \exists \sigma : \calD \to \{+,?,-\} \ 
    \forall f:\calX \to \{+,-\}\ 
    [
    &(
    \neg\phi(f) \rightarrow 
    \forall  \text{ $\calF$-avoiding } \sigma^*  : \sigma_f \not\subset \sigma^*
    ) 
    \ \wedge \\
    &
    \phantom{\neg}(\phi(f) \rightarrow 
    \exists  \text{ $\calF$-avoiding } \sigma^*  : \sigma_f \subset \sigma^*
    )
    ]
\end{align*}
which by introducing indicator variables to encode the mappings $\sigma$, $f$, and $\sigma^*$, yields a quantified Boolean formula 
from the third level $\Sigma_3$ of the polynomial-time hierarchy.
For example, a propagator gadget $\sigma$ modelling $\psi = A \rightarrow B$ where the value of $A$ is encoded in the triple $(1,2,3)$ and the value of $B$ is encoded in the triple $(4,5,6)$ propagates the value of $+$ from $I(A) = (1,2,3)$ to a $+$-value of $I(B) = (4,5,6)$ in all completions of $\sigma$. The only assignment of truth values to $A$ and $B$ where there is no completion is when $A = +$ and $B = -$.

\medskip

Inspired by the conflict-driven clause learning (CDCL) algorithm for solving SAT instances,
we developed a gadget-finding problem algorithm (see \Cref{algo}).
To deal with the outermost quantifier, 
we use a SAT solver to enumerate partial assignments~$\sigma$.
For the second quantifier,
we loop over all assignments $f:\calX \to \{+,-\}$.
Since for our applications, 
the gadgets will have 2 or 3 variables, respectively,
the blowup is at most $2^3=8$.
To cope with the third quantifier,
we again use a SAT solver to decide 
whether $\sigma$ is completable with respect to the assignment~$f$.

To find a gadget, we make use of a partial order on the solution space of partial sign mappings. 
We say a partial mapping $\sigma_1: \calD'_1 \to \{+,-\}$ is smaller than a mapping $\sigma_2: \calD_2' \to \{+,-\}$ (short: $\sigma_1 \prec \sigma_2$)
if $\calD'_1 \subset \calD'_2$ and $\sigma_2(d) = \sigma_1(d)$ for all $d \in \calD_1'$. 
The essential property we use is: 
If $\sigma : \calD' \to \{+,-\}$ 
with $\calD' \subset \calD$ has an $\calF$-avoiding completion, then all $\sigma' \prec \sigma$ have an $\calF$-avoiding completion.

The \emph{down-set} $D(\sigma)$ consists of all $\sigma'$ with $\sigma' \preceq \sigma$,
and 
the \emph{up-set} $U(\sigma)$ consists of all $\sigma'$ with $\sigma' \succeq \sigma$.
If $\sigma_f$ is not completable 
while $\psi(f)=\True$ 
(the gadget is \emph{too strict}),
we search for a minimal too strict partial assignment $\sigma' \preceq \sigma$ and excludes its up-set by adding a constraint to the first-level SAT encoding.
Similarly, if $\sigma_f$ is completable 
while $\psi(f)=\False$ 
(the gadget is \emph{too loose}),
we search for a maximal too loose partial assignment $\sigma' \succeq \sigma$ and exclude its down-set.
Whenever on the third level the mapping $\sigma$
is classified as too strict or too loose,
further clauses will be added to the first level,
which allows to prune the search space significantly.


\newcommand{\Break}{\State \textbf{break} }

\begin{algorithm}[tbp]
\caption{An algorithm for the gadget-finding problem}\label{algo}

\textbf{Input:} \\
A combinatorial substructure on a domain $\calD$ encoded via forbidden substructures $\calF$,
a~quantifier-free Boolean formula 
$\psi$ on variables~$\calX$, 
and a mapping $I : \calX \hookrightarrow \calD$ synchronizing the values of the variables with the signs of some elements in $\calD$. 
\\
\textbf{Output:} \\
A gadget modelling $\psi$, i.e., a partial assignment $\sigma : \calD \to \{+,?,-\}$
such that for every assignment $f:\calX \to \{+,-\}$,
$\sigma_f$~has an $\calF$-avoiding completion if and only if $\psi(f)=\True$.
($\sigma_f$ is obtained from $\sigma$ 
by further setting $\sigma_f(I(x)) := f(x)$ for $x \in \calX$.)

\begin{algorithmic}[1]

\State 
Create a CNF and 
use a SAT solver to 
enumerate partial assignments \par\noindent $\sigma : \calD \to \{+,?,-\}$

\While{$\exists$ partial assignment $\sigma$}
    \State valid := \True
    
    \For{all assignments $f:\calX \to \{+,-\}$}
        \State 
        Use SAT solver to
        test whether $\sigma_f$ has an $\calF$-avoiding completion

        \If{$\psi(f)=+$ and $\sigma_f$ not completable ("too strict")}
            \State 
            Choose a minimal $\sigma' \preceq \sigma$ such that $\sigma'_f$ is not completable \par \hspace{1cm}(again tested via SAT)
            \State
            Add constraint to the CNF to exclude the upset $U(\sigma')$
            \State 
            valid := \False
        \EndIf
        
        \If{$\psi(f)=-$ and $\sigma_f$ completable ("too loose")}
            \State 
            Choose a maximal $\sigma' \succeq \sigma$ such that $\sigma'_f$ is completable 
            \par \hspace{1cm}(again tested via SAT)
            \State
            Add constraint to the CNF to exclude the downset $D(\sigma')$
            \State 
            valid := \False
        \EndIf
        
    \EndFor

    \If{valid} 
        \State \Return{ $\sigma$ }
    \EndIf
    
\EndWhile
\end{algorithmic}
\end{algorithm}


In an early version, 
we have also tested a DPLL-like strategy 
where we only excluded the up-set (resp.\ down-set) whenever a partial assignment $\sigma$ is too strict (resp.\ too loose). 
Formally, instead of steps~7 and~12 one just sets $\sigma' := \sigma$.
In the following we will refer to this variant as the \emph{basic algorithm}, while the algorithm presented in \Cref{algo} is referred to as \emph{advanced algorithm}.
Apparently,
searching for a minimal (resp.\ maximal) $\sigma'$ and excluding its up-set (resp.\ down-set)
significantly reduces the computing times, 
which is a similar behaviour as when going from DPLL to CDCL (cf.\ \cite[Chapter~4]{HandbookSatisfiablity2009}).\footnote{DPLL excludes the up-set of a infeasible partial assignment to solve a CNF instance. CDCL brings a significant speedup by searching a minimal infeasible partial assignment and excluding its up-set.}

\subsection{Computational Aspects}
\label{sec:computional_aspects}

Since the number of SAT instances created and solved by our algorithm is proportional to the number of prunings, an efficient pruning strategy in general yields better computation times.
We observed from our experiments that
the advanced algorithm (CDCL-like) prunes the search space 
much more efficient than the basic algorithm (DPLL-like), i.e., 
it comes to a conclusion faster and with fewer prunings.

Concerning the size of the gadgets, i.e., the number of underlying elements contained in this gadget,
our experiments showed that $n=r+2$ (which is the first nontrivial) is already a good choice for finding gadgets. 
In rank $r=3$,
a hardness reduction with gadgets of size $n=5$ is possible for 31 settings; cf.\ \Cref{fig:stat5}.
For the remaining 10 of the 41 hard settings (cf.\ \Cref{thm:rank3}), 
we found gadgets of size~$n=6$;  cf.\ \Cref{fig:stat6}.
Note that only 39 of the 41 hard settings were found within a timeout-limit of 5 CPU minutes per gadget-search.
The remaining 103 settings in rank 3 do not omit suitable gadgets of size 6 for a hardness reduction, but there might be gadgets of size 7 or larger.
In rank $r=4$, we only searched for gadgets of size $n=r+2=6$ with the advanced algorithm and a timeout of 100 seconds per gadget-search. Within 251 CPU hours our framework found 2328 hard settings among our \NUMSETTINGSFOURSELECTION benchmark settings (cf.\ \Cref{thm:rank4}). In order to reduce the search space of millions of settings, we investigated the benchmark consisting of the settings with no 2 instead of no 3 consecutive + signs in the forbidden family $\calF$.

\begin{figure}[tbp]
\begin{subfigure}{0.48\textwidth}
    \includegraphics[width=\textwidth]{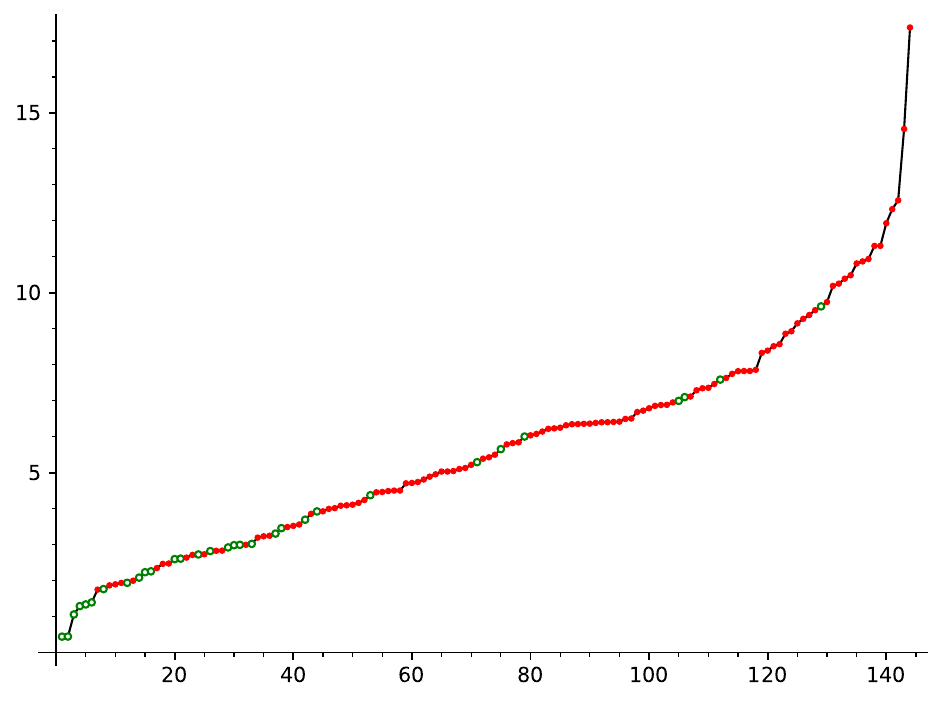}
    \caption{computing times of the basic algorithm}
    \label{fig:n5B_time}
\end{subfigure}
\hfill
\begin{subfigure}{0.48\textwidth}
    \includegraphics[width=\textwidth]{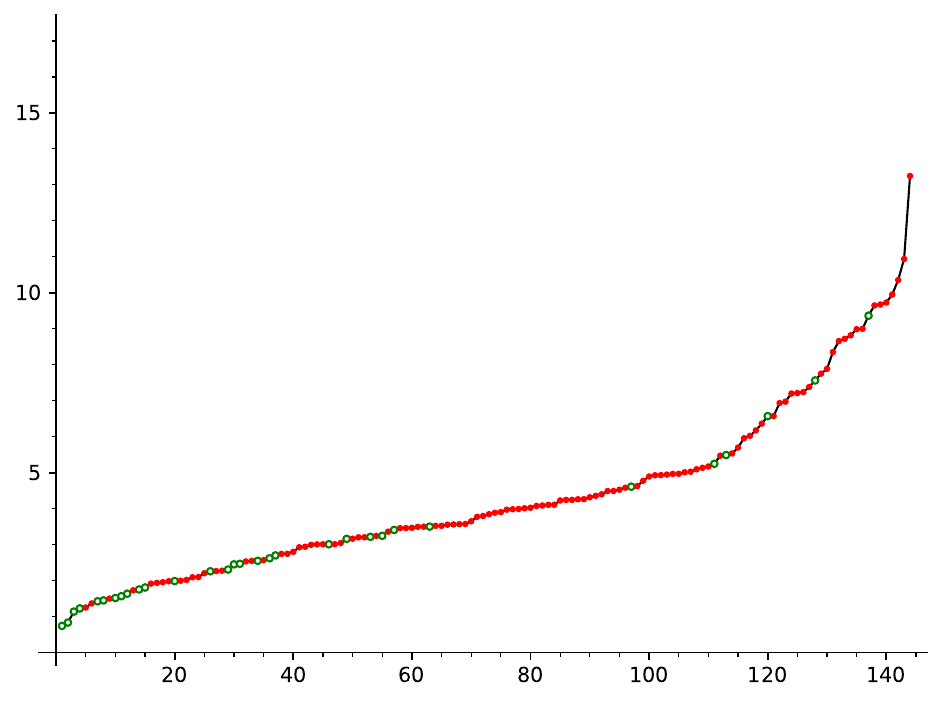}
    \caption{computing times of the advanced algorithm}
    \label{fig:n5A_time}
\end{subfigure}

\medskip

\begin{subfigure}{0.48\textwidth}
    \includegraphics[width=\textwidth]{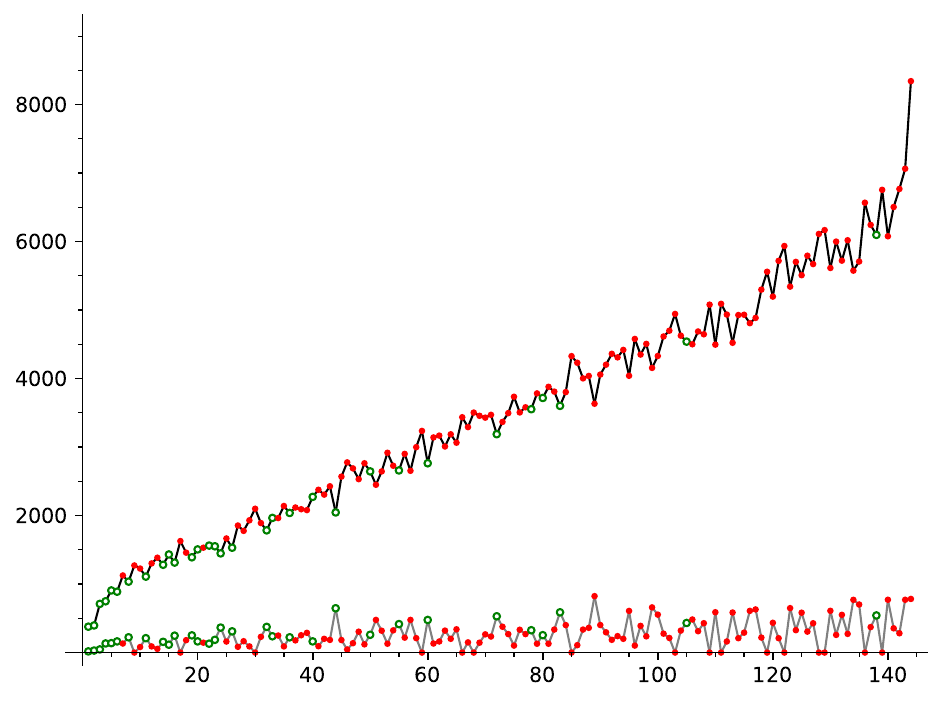}
    \caption{prunings of the basic algorithm}
    \label{fig:n5B_blacklist}
\end{subfigure}
\hfill
\begin{subfigure}{0.48\textwidth}
    \includegraphics[width=\textwidth]{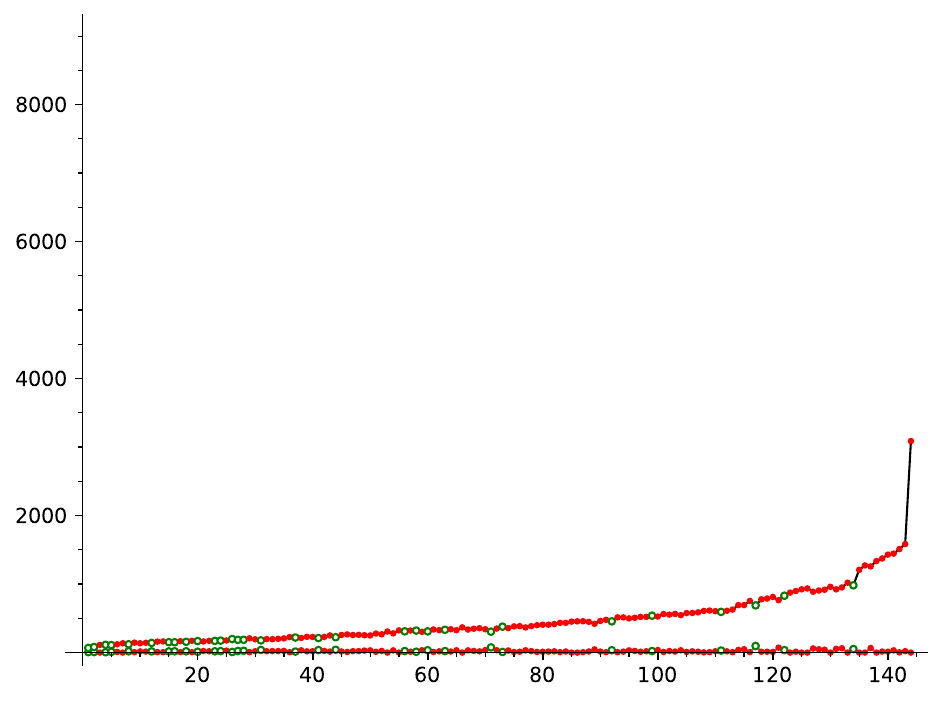}
    \caption{prunings of the advanced algorithm}
    \label{fig:n5A_blacklist}
\end{subfigure}

\caption{
Statistics when searching gadgets of size $n=5$ 
for the 144 settings in rank $r=3$
using the basic algorithm  
and 
the advanced algorithm, respectively.
After a total of 13.7 (resp.\ 10.3) CPU minutes,
31 settings were successfully certified NP-hard,
and the remaining 113 settings were proven not to contain the desired gadgets by the
basic algorithm  (resp.\ the advanced algorithm).
\subref{fig:n5B_time} and \subref{fig:n5A_time} show
the computing times in CPU seconds.
\subref{fig:n5B_blacklist} and \subref{fig:n5A_blacklist} show the number of blacklisting-events;
the gray/black curve shows the number of down/up-prunings. 
The certified/failed settings are marked with a green circle/red square.
}
\label{fig:stat5}
\end{figure}


\begin{figure}[tbp]
\begin{subfigure}{0.48\textwidth}
    \includegraphics[width=\textwidth]{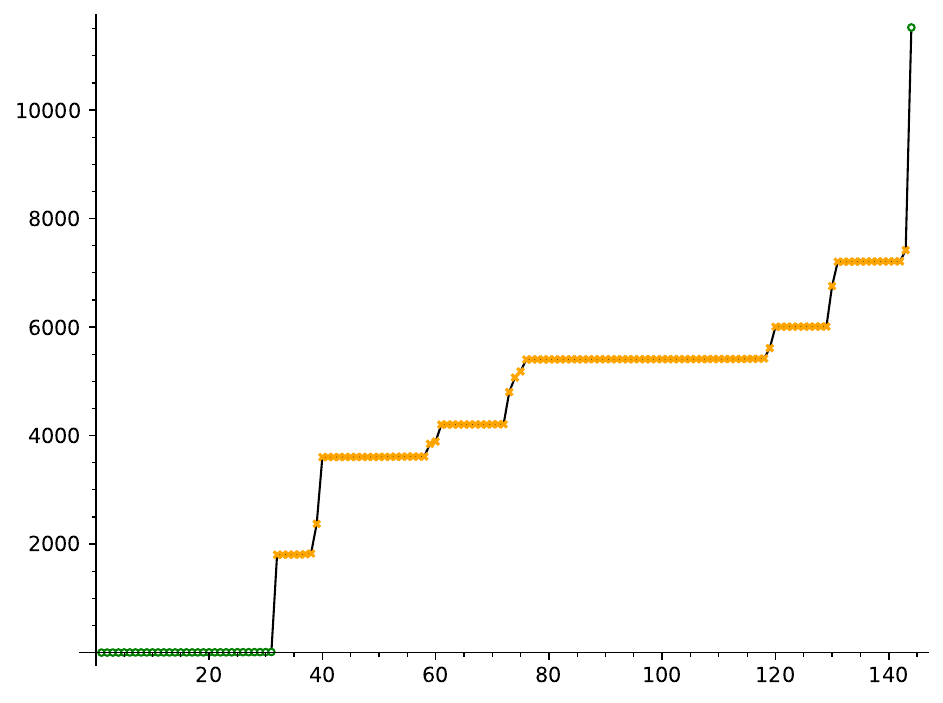}
    \caption{computing times of the basic algorithm}
    \label{fig:n6B_time}
\end{subfigure}
\hfill
\begin{subfigure}{0.48\textwidth}
    \includegraphics[width=\textwidth]{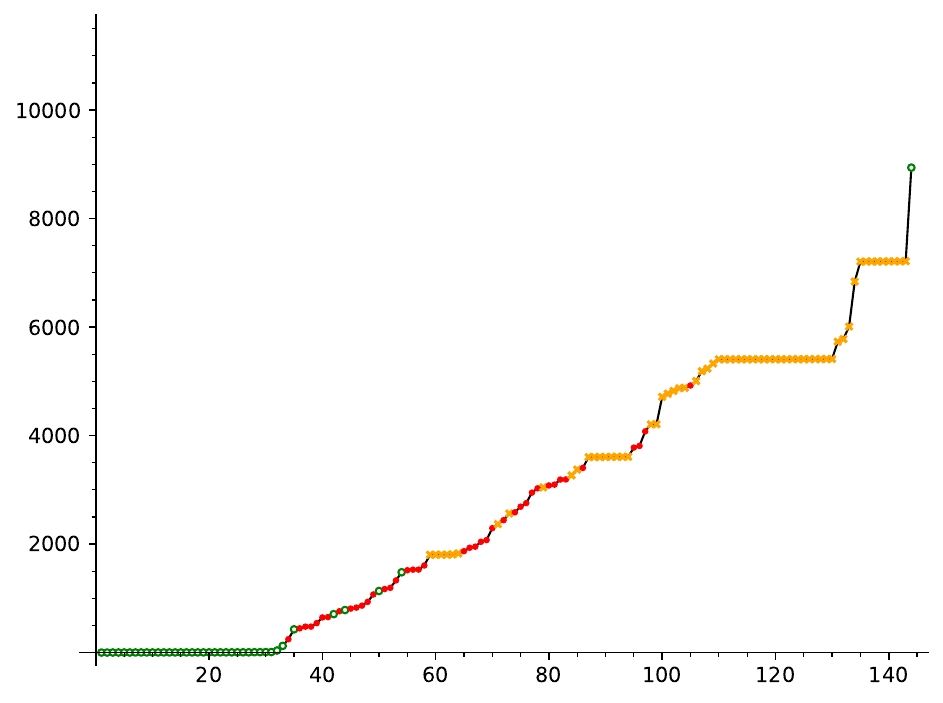}
    \caption{computing times of the advanced algorithm}
    \label{fig:n6A_time}
\end{subfigure}

\medskip

\begin{subfigure}{0.48\textwidth}
    \includegraphics[width=\textwidth]{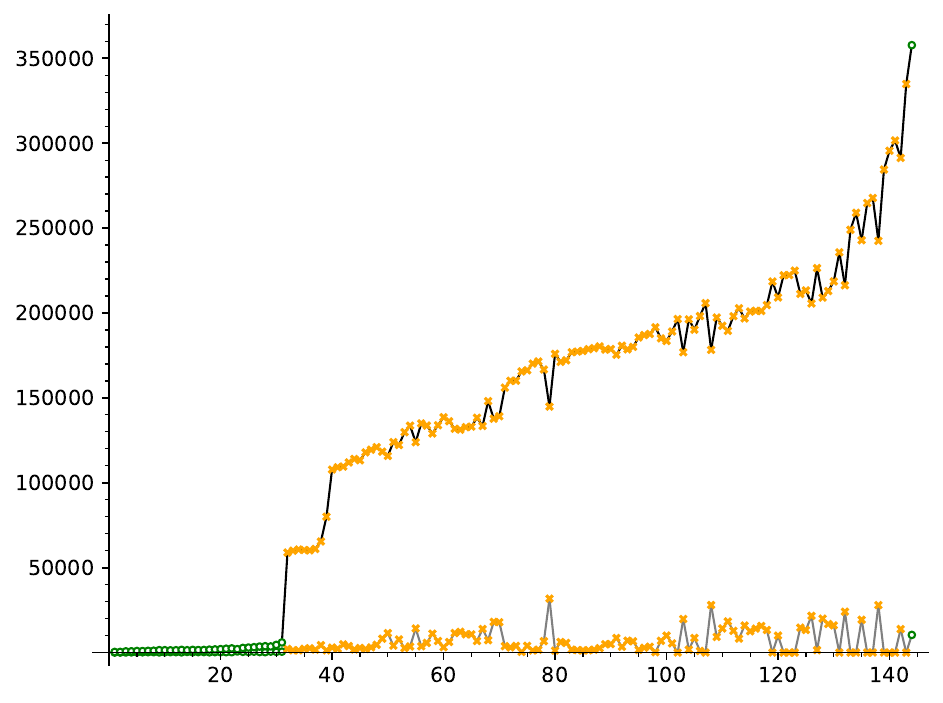}
    \caption{prunings of the basic algorithm}
    \label{fig:n6B_blacklist}
\end{subfigure}
\hfill
\begin{subfigure}{0.48\textwidth}
    \includegraphics[width=\textwidth]{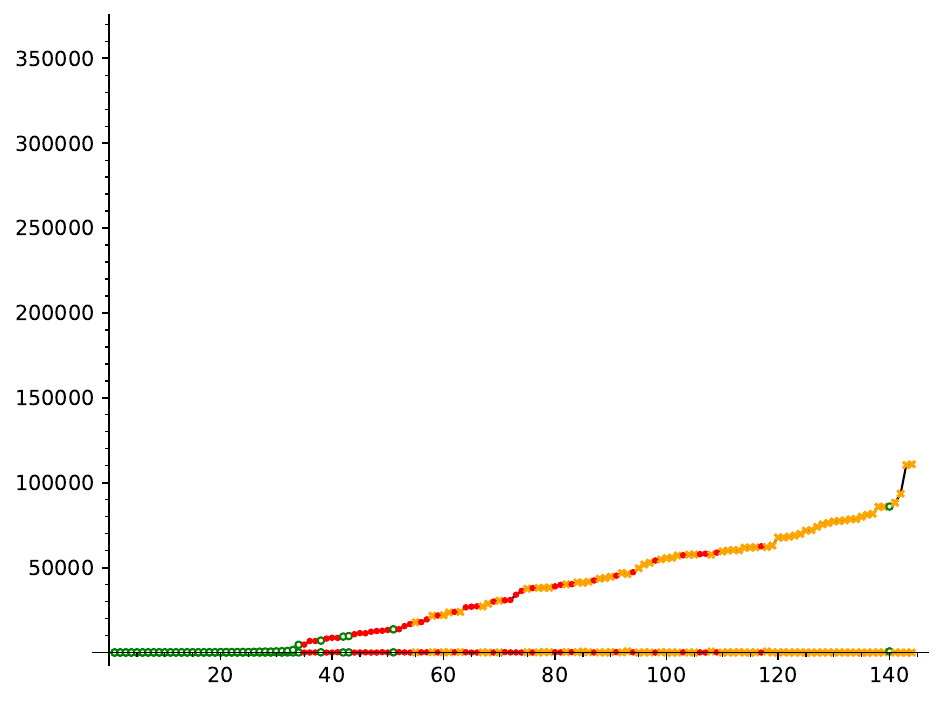}
    \caption{prunings of the advanced algorithm}
    \label{fig:n6A_blacklist}
\end{subfigure}

\caption{
Statistics when searching gadgets of size $n=6$ 
for the 144 settings in rank $r=3$
using 
a timeout of 5 CPU minutes per gadget-search.
After a total of 158 (resp.\ 112) CPU hours,
32/112/0 (resp.\ 39/64/41) settings were successfully certified NP-hard/timed out/were proven not to contain the desired gadgets when running the basic algorithm (resp.\ advanced algorithm).
\subref{fig:n6B_time} and \subref{fig:n6A_time} show
the computing times in CPU seconds.
\subref{fig:n6B_blacklist} and \subref{fig:n6A_blacklist} show the number of blacklisting-events;
the gray/black curve shows the number of down/up-prunings. 
The certified/timed out/failed settings are marked with a green circle/orange cross/red square.
}
\label{fig:stat6}
\end{figure}

As the search space grows significantly as the gadget size $n$ increases,
we first try to find gadgets of sizes $r+1, \ldots, n-1$
before we actually search for size~$n$ 
to keep the computing times as low as possible.
Since 31 of the hard settings in rank~$r=3$ have suitable  gadgets already for $n=5$,
some entries in statistics for $n=6$ come with little time-requirement; cf.\ the leftmost entries in \Cref{fig:n6B_time} and \Cref{fig:n6A_time}.

The CNF instances created internally are relatively small in the beginning but grow over time because clauses are added iteratively to prune the search space. 
For the enumeration, the initial CNF instances have 
$3 \binom{n}{r}$ variables and  $\binom{n}{r+1} \cdot |\calF| + O(n^r)$ clauses where $\calF$ denotes the set of forbidden patterns. 
The instances for testing completability are essentially the same, except that we pre-assign some of the variables by additional $O(n^r)$ clauses.
In the advanced algorithm, we create further instances for finding a minimal/maximal configuration which is too strict/too loose.
Recall that these will be utilized for a more efficient pruning.
To find such a minimal/maximal configuration $\sigma'$ for a configuration $\sigma$,
we iteratively search for configurations with one fewer/one additional element set.
To assert that the number of set entries is exactly $k$ for some integer $0 \le k \le \binom{n}{r}$, we use a sequential cardinality counter  in the CNF, which comes with $O(k n^r)$ auxiliary variables and constraints; cf.\ \cite{Sinz2005}.

\Cref{fig:stat5} and \Cref{fig:stat6}
show time and pruning statistics for our computations for gadgets of sizes $n=5$ and $n=6$,
and for both variants of the algorithm.
The main difference between the two variants of the algorithm is that the advanced algorithm (CDCL-like) has significantly less prunings than the basic algorithm (DPLL-like). 
This pruning in particular reduces the size of the instances and the computing times. 
On the other hand, the advanced algorithm has some overhead to compute an optimal pruning.
In practice, however,
the advanced algorithm already
outperformed the basic algorithm for $n=5$.
For $n=6$, the basic algorithm timed out for most of the settings.
This comparison clearly underlines the superiority of the advanced algorithm over the basic algorithm.

\section{Proof of Theorem~\ref{thm:rank3} for other settings}
\label{sec:others}

Only a specific combination of gadgets will work to conclude an NP-hardness proof as presented in Section~\ref{sec:proof}.
In the following we explain how we employed the search algorithm from Section~\ref{sec:algorithm} to eventually find the desired propagator gadgets and clause gadgets.

Recall that there are four possible disjunctions on two Boolean variables $x$ and~$y$:
\goodbreak
\begin{itemize}
\item 
$x \vee y$, which is logically equivalent to $\neg x \rightarrow y$ and  $\neg y \rightarrow x$;
\item 
$x \vee \neg y$, which is logically equivalent to $\neg x \rightarrow \neg y$ and  $y \rightarrow x$;
\item 
$\neg x \vee y$, which is logically equivalent to $x \rightarrow y$ and  $\neg y \rightarrow \neg x$;
\item 
$\neg x \vee \neg y$, which is logically equivalent to $x \rightarrow \neg y$ and  $y \rightarrow \neg x$;
\end{itemize}
\goodbreak
Hence we are interested in the following four possible propagator gadgets
\begin{itemize}
\item 
$PG(X_1{-} \rightarrow  X_2{+}) = PG(X_2{-} \rightarrow  X_1{+})$
\item 
$PG(X_1{-} \rightarrow   X_2{-}) = PG(X_2{+} \rightarrow   X_1{+})$
\item 
$PG(X_1{+} \rightarrow  X_2{+}) = PG(X_2{-} \rightarrow  X_2{-})$
\item 
$PG(X_1{{+}} \rightarrow  X_2{-})=PG(X_2{{+}} \rightarrow  X_1{-})$
\end{itemize}
\noindent
where $X_1=(a,a+1,a+2)$,  $X_2=(b,b+1,b+2)$,
with $a<b$. \\

\goodbreak
Similarly, there are eight candidates for a clause gadget:
\begin{multicols}{2}
\begin{itemize}
\item 
$CG(X_1{+} \vee X_2 {+}\vee X_3{+})$
\item 
$CG(X_1{+} \vee X_2{+} \vee  X_3{-})$
\item 
$CG(X_1{+}  \vee  X_2{-} \vee X_3{+})$
\item 
$CG(X_1{+}  \vee  X_2{-} \vee X_3-)$
\item 
$CG( X_1{-} \vee X_2{+} \vee X_3{+})$
\item 
$CG( X_1{-} \vee X_2{+} \vee  X_3-)$
\item 
$CG( X_1{-} \vee X_2{-} \vee X_3{+})$
\item 
$CG( X_1{-} \vee  X_2{-} \vee X_3{-})$
\end{itemize}
\end{multicols}

While in Section~\ref{sec:proof}
we used four propagator gadgets for the reduction,
it is also possible to continue with fewer.

\begin{itemize}
\item 
(Scenario 1:) If all four of the propagator gadgets exist,
it is sufficient to have one of the eight clause gadgets
to perform a hardness reduction analogous to Section~\ref{sec:proof}.
Note that it is sufficient to find the two propagator gadgets 
$PG(X_2{-} \rightarrow X_1{+})$ and $PG(X_2{+} \rightarrow X_1{-})$, 
because we  build the other two by combining those two. 
To build a propagator gadget 
$PG(X_3{-} \rightarrow X_1{-})$, 
we use
$PG(X_3{-} \rightarrow X_2{+})$ and $PG(X_2{+} \rightarrow X_1{-})$
for $X_1=(a,a+1,a+1)$,  $X_2=(b,b+1,b+2)$, $X_3=(c,c+1,c+2)$ with $a<b<c$. 
Note that this encoding is slightly bigger (by  a factor of at most two)
because the negation of a variable is stored in an auxiliary triple.
Building a 
$PG(X_3{+} \rightarrow X_1{+})$ gadget is~analogous. 
\end{itemize}
Otherwise, if those two propagator gadgets do not exist,
we need to be careful.
Logical values need to be propagated either to the left or to the right side
and therefore the choice of the clause gadget and 
the positive/negated variables in the clause 
play a central role.

If there exists a clause gadget for $c = x_1 \vee x_2 \vee x_3$, that is, the disjunction of three positive~variables,
we proceed as following:
\begin{itemize}
\item 
(Scenario 2:) 
If $PG(X_2{+} \rightarrow X_1{+})$ and $PG(X_2{+} \rightarrow X_1{-})$ exist, we  proceed as in Section~\ref{sec:proof} and place the elements for the variables to the left of the elements for the~clauses.
\item 
(Scenario 3:) 
If $PG(X_1{+} \rightarrow X_2{+})$ and $PG(X_1{+} \rightarrow X_2{-})$ exist, we  proceed as in Section~\ref{sec:proof} but place the elements for the variables to the right of the elements for the~clauses.
\end{itemize}
Analogously, we deal with the clause gadget for clauses on three negative variables $c = \neg x_1 \vee \neg x_2 \vee \neg x_3$:
\begin{itemize}
\item 
(Scenario 4:) 
If $PG(X_2{-} \rightarrow X_1{+})$ and $PG(X_2{-} \rightarrow X_1{-})$ exist, we  proceed as in Section~\ref{sec:proof} and place the elements for the variables to the left of the elements for the~clauses.
\item 
(Scenario 5:) 
If $PG(X_1{-} \rightarrow X_2{+})$ and $PG(X_1{-} \rightarrow X_2{-})$ exist, we  proceed as in Section~\ref{sec:proof} but place the elements for the variables to the right of the elements for the~clauses.
\end{itemize}

We have implemented precisely those five scenarios in our framework.
All \NUMPROOFS settings were found and verified by running these tests.
To exclude errors from the SAT solver,
we check the correctness of any model returned by the solver 
in the case a CNF is satisfiable.
Otherwise, if the CNF is unsatisfiable,
the solver can output a DRAT certificate and we employ the independent proof-checking tool DRAT-trim~\cite{WetzlerHeuleHunt2014} to verify the certificate.
Technical details and instructions for verifying and reproducing our experiments are deferred to the README file of our supplemental data \cite{supplemental_data}.

\section{Proof of Theorem~\ref{thm:even_rank}}
\label{app:even_rank}

Let $r \geq 4$ be an even integer.
We construct gadgets which fulfill the requirements of Scenario~1, that is, all four propagator gadgets exists;  cf.\ \Cref{sec:others}. 
As described above it is sufficient to construct the two propagator gadgets  $PG(X_2{-} \to X_1{+})$ and $PG(X_2{+} \to X_1{-})$
as the other two can be constructed from them. 
With all four propagator gadgets at hand, 
any clause gadget is sufficient 
to complete the reduction. 
We will construct
the clause gadget $CG(X_1{+} \vee X_2{+} \vee X_3{+})$.

\subsection{Propagator Gadget \texorpdfstring{$PG(X_2{-} \to X_1{+})$}{PG(X2- -> X1+)}}
We construct a gadget $PG(X_2{-} \to X_1{+})$ on $r+1$ elements $a_1 < \ldots < a_{r+1}$ such that the sign of the $r$-tuple $X_1 = (a_1, \ldots, a_r)$ encodes the value of variable $x_1$ and $X_2 = (a_2, \ldots, a_{r+1})$ encodes~$x_2$.
Let $\sigma_P$ be the mapping restricted to the considered $r+1$ elements with the following signs
$\sigma_P(a_1, \ldots a_{i-1}, a_{i+1}, \ldots a_{r+1}) = (-)^{i}$ for $i = 2, \ldots, r$.
This gives the following sign sequence on 
the elements
$\{a_1,\ldots,a_{r+1}\}$
\begin{align*}
    ? \underbrace{{+}{-}{+}{-}\ldots{+}}_{\text{$r-1$ signs}}?,
\end{align*}
where the first $?$ symbol indicates that $\sigma_P(X_1)$ is free and the last $?$ indicates that $\sigma_P(X_2)$ is free.
Hence for the propagator gadget we have the following property.

\begin{claim}
    Every completion $\sigma$ of $\sigma_P$ with $\sigma(X_2) = - $ fulfills $\sigma(X_1) = +$. 
    Moreover, if we additionally assume $\sigma_P(X_2) = + $ or $\sigma_P(X_1) = +$ there exists a completion. 
\end{claim}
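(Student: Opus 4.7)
The plan is straightforward since the only $(r+1)$-element subset on the ground set is $\{a_1,\ldots,a_{r+1}\}$ itself, so $\calF$-avoidance reduces to a single constraint on a single sign sequence of length~$r+1$.

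First I would determine the sign sequence of $\{a_1,\ldots,a_{r+1}\}$ in lexicographic order. The $i$-th $r$-subset in this order is obtained by deleting $a_{r+2-i}$; in particular position~$1$ is $X_1$, position~$r+1$ is $X_2$, and for $2 \le i \le r$ the sign is $\sigma_P(a_1,\ldots,a_{r+2-i-1},a_{r+2-i+1},\ldots,a_{r+1}) = (-)^{r+2-i}$. Since $r$ is even, $(-)^{r+2-i}$ equals $+$ when $i$ is even and $-$ when $i$ is odd. Therefore positions $2,3,\ldots,r$ read $+,-,+,-,\ldots,+$ (an alternating block of length $r-1$ starting and ending with $+$), and the full sign sequence is
\[
\sigma(X_1),\,+,\,-,\,+,\,-,\,\ldots,\,+,\,\sigma(X_2).
\]

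Next I would do a four-case analysis on $(\sigma(X_1),\sigma(X_2))$. The assignments $(+,+)$, $(+,-)$ and $(-,+)$ all break alternation (two $+$'s appear consecutively at the beginning or end), so the resulting sequence matches neither ${+}{-}{+}\ldots{+}$ nor ${-}{+}{-}\ldots{-}$ of length~$r+1$ and is hence $\calF$-avoiding. The assignment $(-,-)$ produces $-,+,-,+,\ldots,+,-$, which is exactly the forbidden pattern ${-}{+}{-}\ldots{-}$ of length~$r+1$. This immediately gives the implication $\sigma(X_2)=- \Rightarrow \sigma(X_1)=+$ for any completion~$\sigma$.

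For the ``moreover'' part, I would observe that each of the prescribed partial assignments $\sigma_P(X_2)=+$ and $\sigma_P(X_1)=+$ leaves at least one of the three $\calF$-avoiding cases above available: fixing $\sigma(X_2)=+$ allows either value for $X_1$, and fixing $\sigma(X_1)=+$ allows either value for $X_2$. In each case one obtains a legal completion on the $(r+1)$-element ground set, completing the verification. I do not foresee any genuine obstacle here; the argument is a direct inspection of one sign sequence, and the only subtlety is correctly identifying the parity pattern induced by the lexicographic ordering, which is what the choice $(-)^{i}$ in the definition of $\sigma_P$ is designed to arrange.
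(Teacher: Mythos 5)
Your proposal is correct and follows the same route the paper intends: the paper merely displays the sign sequence $?\,{+}{-}{+}\cdots{+}\,?$ and asserts the claim, leaving the inspection implicit, while you carry it out explicitly (only one $(r+1)$-subset to check, the $(-,-)$ completion is exactly the forbidden alternating pattern, and every other completion creates two consecutive $+$ signs). Your parity computation for the lexicographic ordering matches the paper's convention, so there is nothing to add.
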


\subsection{Propagator Gadget \texorpdfstring{$PG(X_2+ \to X_1-)$}{PG(X2+ -> X1-)}}

Similar as in the previous gadget, we use $X_1 = (a_1, \ldots, a_r)$ and $X_2 = (a_2, \ldots, a_{r+1})$ to encode $x_1$ and $x_2$, respectively. 
The mapping $\sigma_P$ restricted to the $r+1$ elements 
$a_1 < \ldots < a_{r+1}$ has the signs
$\sigma_P(a_1, \ldots a_{i-1}, a_{i+1}, \ldots a_{r+1}) = (-)^{i+1}$ for $i = 2, \ldots, r$.
This gives the following sign sequence on 
the elements
$\{a_1,\ldots,a_{r+1}\}$
\begin{align*}
    ? \underbrace{{-}{+}{-}{+}\ldots{-}}_{\text{$r-1$ signs}}?,
\end{align*}
where the first $?$ symbol indicates that $\sigma_P(X_1)$ is free and the last $?$ indicates that $\sigma_P(X_2)$ is free.
Hence for the propagator gadget we have the following property

\begin{claim}
    Every completion $\sigma$ of $\sigma_P$ with $\sigma(X_2) = + $ fulfills $\sigma(X_1) = -$. 
    Moreover, if we additionally assume $\sigma_P(X_2) = - $ or $\sigma_P(X_1) = -$ there exists a completion. 
\end{claim}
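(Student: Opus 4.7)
The plan is to reduce the entire verification to analyzing a single length-$(r+1)$ word. Since the gadget lives on exactly $r+1$ elements and every forbidden pattern in $\calF$ has length $r+1$, the only $(r+1)$-subset on which $\calF$-avoidance can fail is the whole ground set $\{a_1,\ldots,a_{r+1}\}$. So both parts of the claim reduce to inspecting one word of length $r+1$ under the prescribed partial assignment, and checking whether it can coincide with one of the two alternating forbidden patterns of length $r+1$.

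First, I would spell out the full sign sequence on the ground set under $\sigma_P$. Using that in lexicographic order the $k$-th $r$-subset of $\{a_1,\ldots,a_{r+1}\}$ is the one omitting $a_{r+2-k}$, the rule $\sigma_P(\ldots) = (-)^{i+1}$ for $i = 2,\ldots,r$ places signs $-,+,-,+,\ldots,-$ in positions $2,\ldots,r$; the parity check here uses that $r$ is even, so $r-1$ is odd and the middle word starts and ends with $-$. This matches the displayed sequence $?\,{-}{+}{-}\ldots{-}\,?$. For the forcing direction, I would assume $\sigma(X_2) = +$ and derive $\sigma(X_1) = -$ by contradiction: if $\sigma(X_1) = +$ as well, the completed word is ${+}{-}{+}\ldots{-}{+}$ of length $r+1$, which is exactly the first forbidden pattern $\underbrace{{+}{-}{+}\ldots{+}}_{r+1}$. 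This contradicts $\calF$-avoidance, so $\sigma(X_1) = -$.

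For the moreover part, I would exhibit explicit completions in each case: if $\sigma_P(X_2) = -$, set $\sigma(X_1) = -$; if $\sigma_P(X_1) = -$, set $\sigma(X_2) = -$. In both cases the resulting word is ${-}{-}{+}{-}\ldots{-}{-}$, which differs from ${+}{-}{+}\ldots{+}$ at position $1$ and from ${-}{+}{-}\ldots{-}$ at position $2$, hence avoids $\calF$. The only delicate step is the parity accounting in the forcing direction: the hypothesis that $r$ is even is exactly what makes $r+1$ odd, so that an alternating $+,-,+,-,\ldots$ word of length $r+1$ ends with a $+$ matching the free slot $X_2$; if $r$ were odd, the endpoints would not line up and the argument would fail. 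Apart from this parity bookkeeping, the verification is mechanical.
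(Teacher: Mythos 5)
Your proof is correct and follows exactly the route the paper intends (the paper states this claim without proof, treating it as immediate from the displayed sign sequence): since the gadget has only $r+1$ elements, the whole verification reduces to the single length-$(r+1)$ word $\sigma(X_1)\,{-}{+}\ldots{-}\,\sigma(X_2)$, whose fixed middle already rules out the pattern ${-}{+}{-}\ldots{-}$ and matches ${+}{-}{+}\ldots{+}$ only when both free entries are $+$. Your lexicographic-position and parity bookkeeping is accurate, so your writeup simply supplies the details the paper leaves to the reader.
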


\subsection{Clause Gadget \texorpdfstring{$CG(X_1{+} \vee X_2{+} \vee X_3{+})$}{CG(X1+ v X2+ v X3+)}}

We construct a clause gadget on $r+2$ consecutive elements $C = \{c_1, \ldots, c_{r+2}\}$
with $X_1 = (c_1, \ldots, c_r)$, $X_2 =(c_2, \ldots, c_{r+1} ) $ , $X_3 = (c_3, \ldots, c_{r+2}) $ with the following signs:
\begin{align}
    \sigma_C(c_1, c_2, \ldots c_{i-1}, c_{i+1}, \ldots, c_{r},c_{r+2}) &= (-)^{i}  &\text{for } i = 2, \ldots, r \label{rule:ir+1}\\ 
    \sigma_C(c_2, c_3, \ldots, c_{i-1}, c_{i+1} , \ldots, c_{r+1}, c_{r+2}) &= (-)^{i-1} &\text{for } i = 3, \ldots, r \label{rule:1i}\\
    \sigma_C(c_1, c_2, \ldots, c_{i-1}, c_{i+1}, \ldots c_{r+1}) &= (-)^i  &\text{for } i = 2, \ldots, r-1\label{rule:il} \\
    \sigma_C(c_1, \ldots, c_{r-1}, c_{r+1}) &= -\label{rule:rl}
\end{align}

Note that those four rules for the signs do not overlap, i.e. every $r$-tuple gets a sign from at most one of the rules. 
Moreover, the so constructed partial sign mapping is $\calF$-avoiding and has the following property. 
\begin{claim}
    Every completion $\sigma$ of $\sigma_C$ fulfills $\sigma(X_1) = {+} \vee \sigma(X_2) = {+} \vee \sigma(X_3) = {+}$. 
    Moreover, 
    if we assign $\sigma_C(X_1) = {+}$, $\sigma_C(X_2) = +$, or $\sigma_C(X_3) = {+}$, then
    there exists a completion $\sigma$.
\end{claim}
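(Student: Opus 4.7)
The plan is to prove both halves of the claim by examining the sign sequences that $\sigma_C$ induces on the $(r+1)$-element subsets of $C=\{c_1,\ldots,c_{r+2}\}$. Since the two forbidden words are the strictly alternating patterns of length $r+1$, any $(r+1)$-sequence that contains two consecutive $+$ symbols is automatically $\calF$-avoiding; this observation drives most of the case work.

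For the forward implication, I would concentrate on the two extreme $(r+1)$-subsets $S_1:=C\setminus\{c_{r+1}\}$ and $S_2:=C\setminus\{c_1\}$, which are the only ones containing $\{X_1,X_2\}$ respectively $\{X_2,X_3\}$. On $S_1$, the $r$-subsets in lex order are obtained by removing the elements of $S_1$ one by one from largest to smallest, so the defining rule~\eqref{rule:ir+1} fixes the middle $r-1$ signs to $+,-,+,-,\ldots,+$ (ending in $+$ because $r$ is even), while the first position is $\sigma(X_1)$ and the last is the a~priori free $\sigma(c_2,\ldots,c_r,c_{r+2})$. The second entry being $+$ already rules out a match with ${+}{-}{+}\ldots{+}$, so the only possible forbidden completion is ${-}{+}{-}\ldots{-}$, which forces $\sigma(X_1)={-}\Rightarrow\sigma(c_2,\ldots,c_r,c_{r+2})={+}$. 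An entirely parallel calculation on $S_2$ using rule~\eqref{rule:1i} produces the sign sequence $\sigma(X_2),\sigma(c_2,\ldots,c_r,c_{r+2}),-,+,-,\ldots,+,\sigma(X_3)$, and once the middle entry is pinned to $+$ the only forbidden completion requires $\sigma(X_2)=\sigma(X_3)={-}$. Concatenating the two implications delivers the clause disjunction $\sigma(X_1)={+}\vee\sigma(X_2)={+}\vee\sigma(X_3)={+}$.

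For the moreover part, my plan is to exhibit one universal completion that handles all three moreover-assumptions simultaneously: extend $\sigma_C$ by assigning $+$ to every still-unset $r$-tuple, including the three $X_i$, and call the result $\sigma^*$. Because this assignment does not depend on which $\sigma_C(X_i)={+}$ was postulated, it suffices to verify once that $\sigma^*$ is $\calF$-avoiding, i.e.\ that the sign sequence $\sigma^*|_T$ on each of the $r+2$ subsets $T=C\setminus\{c_k\}$ contains two consecutive $+$ symbols. For $k\in\{1,r+2\}$ this is immediate from the first-half computations, as both endpoint entries are now $+$. For intermediate $k\in\{2,\ldots,r+1\}$ I would compute lex-position by lex-position which of the four defining rules (if any) governs the corresponding $r$-tuple, and confirm that the resulting $-$ signs are too sparse to preclude a $++$ pair.

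The main obstacle will be this bookkeeping on the intermediate $(r+1)$-subsets. The four defining rules assign $-$ to $r$-tuples determined by missing a pair of specific elements with sign depending on a parity, and the asymmetry introduced by rule~\eqref{rule:rl} together with the distinguished role of $c_{r+1}$ in rules~\eqref{rule:ir+1},\eqref{rule:il} and of $c_1$ in rule~\eqref{rule:1i} makes the indexing somewhat delicate around $c_r,c_{r+1},c_{r+2}$. I expect the cleanest write-up to tabulate $\sigma^*(C\setminus\{c_k,c_j\})$ as a function of the pair $(k,j)$ and then to observe that two lex-adjacent $r$-tuples of $T=C\setminus\{c_k\}$ can both acquire a $-$ sign only under a very restrictive incidence pattern of indices -- one that always leaves an unset-$+$ tuple in a neighbouring lex position, providing the required $++$.
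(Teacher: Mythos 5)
Your first half is correct and takes the same route as the paper: both arguments hinge on the two $(r+1)$-subsets $\{c_1,\ldots,c_r,c_{r+2}\}$ and $\{c_2,\ldots,c_{r+2}\}$ and the same linking tuple $Y=(c_2,\ldots,c_r,c_{r+2})$; you merely chain the implications in contrapositive form ($\sigma(X_1)={-}\Rightarrow\sigma(Y)={+}\Rightarrow\sigma(X_2)={+}\vee\sigma(X_3)={+}$), whereas the paper assumes all three signs are $-$ and derives a contradiction. (A cosmetic slip: your $S_1=C\setminus\{c_{r+1}\}$ does not contain $X_2$; the pair it relates is $X_1$ and $Y$, which is what your computation actually uses.)

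The gap is in the moreover part. The all-$+$ completion is the right choice, but you never carry out the verification on the subsets $T=C\setminus\{c_k\}$ that you defer to ``bookkeeping'', and your case split is mis-indexed: the subsets settled by your first-half computations are $k=1$ and $k=r+1$, not $k=1$ and $k=r+2$. Consequently the subset $\{c_1,\ldots,c_{r+1}\}$ --- the one governed by rules \eqref{rule:il} and \eqref{rule:rl} and containing both $X_1$ and $X_2$ --- is declared ``immediate'' although it was never computed; it is in fact safe because positions $2$ and $3$ of its sign sequence (the tuples additionally missing $c_r$ resp.\ $c_{r-1}$) are both $-$, so no completion of it can be alternating. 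For $k\in\{2,\ldots,r-1\}$ the clean observation is that the two lexicographically first $r$-tuples of $T$ (those additionally missing $c_{r+2}$ resp.\ $c_{r+1}$) receive the \emph{same} sign $(-)^k$ from rules \eqref{rule:il} and \eqref{rule:ir+1}, which already excludes both alternating patterns. Your expectation that lex-adjacent tuples always behave this uniformly fails precisely at $k=r$, where rule \eqref{rule:rl} gives $-$ while rule \eqref{rule:ir+1} gives $(-)^r={+}$; there you must argue instead that the tuples in lex positions $3,\ldots,r$ of $T$ contain $c_1$, $c_{r+1}$ and $c_{r+2}$ simultaneously, are therefore untouched by all four rules, and become $+$ in your completion, yielding the harmless sequence ${-}{+}{+}\cdots{+}{-}$. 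So your strategy does close, but as written the verification is incomplete and the ``delicate'' cases around $c_r,c_{r+1},c_{r+2}$ that you flagged are exactly the ones left open.
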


\begin{proof}
To show the first part, suppose towards a contradiction that there is a completion $\sigma$ which does not fulfill $\sigma(X_1) = {+} \vee \sigma(X_2) = {+} \vee \sigma(X_3) = {+}$. Hence it holds $\sigma(X_1) = {-}$, $\sigma(X_2) = {-}$ and $\sigma(X_3) = {-}$. 
By rule (\ref{rule:1i}),
the sign sequence of the $(r+1)$-element subset $\{c_2, \ldots, c_{r+1}, c_{r+2}\}$ is
\begin{align*}
    \sigma(X_2) \sigma(Y) -+ \ldots -+ \sigma(X_3),
\end{align*}
where $Y:=(c_2,\ldots,c_r,c_{r+2})$. 
Note that the sign of $Y$ 
is not determined by the four rules.
If $\sigma(X_2) = \sigma(X_3) =-$, it implies $\sigma(Y) = -$ in order to avoid the alternating sign pattern.

Next observe that, by rule (\ref{rule:ir+1}),
the sequence induced by the $(r+1)$-element subset $\{c_1, \ldots, c_{r}, c_{r+2}\}$
is
\begin{align*}
    \sigma(X_1) +- \ldots + \sigma(Y).
\end{align*}
By the previous observation we know that $\sigma(Y) = -$, a contradiction since 
$ \sigma(X_1) = -$. 

To show the second part, we consider the gadget and additionally 
assume that $\sigma_C(X_1) = {+}$, $\sigma_C(X_2) = +$, or $\sigma_C(X_3) = {+}$ is assigned. 
Clearly in those cases the two sign sequences above are valid. 
It remains to check the other sign sequences. 
As we see all of them have two consecutive signs which are the same and hence they are $\calF$-avoiding.  
Let us first have a look at the sign sequence of the $\{r+1\}$-subset $\{c_1, \ldots, c_{r+1}\} $. 
By rules (\ref{rule:il}) and (\ref{rule:rl}),
the sign sequence is 
\begin{align*}
    \sigma(X_1) - - + \ldots -+ \sigma(X_2),
\end{align*}
which is valid independent from the sign of $ \sigma(X_1)$ and $\sigma(X_2)$. 

To complete the proof, it remains to check the sign sequence of the $(r+1)$-element subset $\{c_1, \ldots, c_{j-1}, c_{j+1}, \ldots,  c_{r+2}\} $
for all $j = 2, \ldots, r$.
The signs of the two $r$-tuples $(c_1, \ldots, \allowbreak c_{j-1}, c_{j+1}, \ldots, \allowbreak c_r, c_{r+2}) $ and $(c_1, \ldots, c_{j-1}, c_{j+1}, \ldots,c_r, c_{r+1}) $ are consecutive.
By rules (\ref{rule:ir+1}) and (\ref{rule:il}), these two signs are equal  
\begin{align*}
    \sigma(c_1, \ldots, c_{j-1}, c_{j+1}, \ldots,c_r, c_{r+2})  = (-)^j = 
    \sigma(c_1, \ldots, c_{j-1}, c_{j+1}, \ldots,c_r, c_{r+1}),
\end{align*}
and hence also this sign sequence cannot be alternating.
This completes the proof.
\end{proof}

\section{Conclusion}
\label{sec:discussion}

We presented a SAT-based approach for finding gadgets for hardness reductions for the completion problem of sign mappings in an automated manner.
Our framework found  \NP-hardness reductions for thousands of structures and, inspired by the computational data, we were able to construct an infinite family for which \Complete is \NP-hard. 

Even though our framework is able to find gadgets automatically in reasonable time, 
the challenging task remains to find a suitable combination lemma for the reduction.

Our framework might be of independent interest
since it can be adapted to work with any related combinatorial structure and logical puzzles.
Moreover, it may help to find gadgets for reductions for other complexity classes such as \APX{}.


{
	\small
	\bibliographystyle{alphaabbrv-url}
	\bibliography{references}
}

\end{document}